\title{Are controlled unitaries helpful?}
\author{
    Ewin Tang\thanks{UC Berkeley. \texttt{\{ewin,jswright\}@berkeley.edu}} 
    \and John Wright\footnotemark[1]
}
\date{}
\begin{document}

\maketitle


\begin{abstract}
Many quantum algorithms, to compute some property of a unitary $U$, require access not just to $U$, but to $\controlled U$, the unitary with a control qubit.
We show that having access to $\controlled U$ does not help for a large class of quantum problems.
For a quantum circuit which uses $\controlled U$ and $\controlled U^\dagger$ and outputs $\ket{\psi(U)}$, we show how to ``decontrol'' the circuit into one which uses only $U$ and $U^\dagger$ and outputs $\ket{\psi(\varphi U)}$ for a uniformly random phase $\varphi$, with a small amount of time and space overhead.
When we only care about the output state up to a global phase on $U$, then the decontrolled circuit suffices.
Stated differently, $\controlled U$ is only helpful because it contains global phase information about $U$.

A version of our procedure is described in an appendix of Sheridan, Maslov, and Mosca~\cite{smm09}.
Our goal with this work is to popularize this result by generalizing it and investigating its implications, in order to counter negative results in the literature which might lead one to believe that decontrolling is not possible.
As an application, we give a simple proof for the existence of unitary ensembles which are pseudorandom under access to $U$, $U^\dagger$, $\controlled U$, and $\controlled U^\dagger$.
\end{abstract}
%
\hypersetup{linktocpage}
\tableofcontents
%

\section{Introduction}


\emph{Unitary problems} have revealed themselves to be fundamental to how we think about quantum computation.
Over the past quarter-century, the most important quantum algorithms have all been abstracted away into more general subroutines which capture their key algorithmic ideas.
Today, we understand Grover's algorithm for unstructured search~\cite{grover96} as a special case of amplitude amplification~\cite{BHMT02}; Shor's algorithm for factoring~\cite{shor1994Factoring} is phase estimation~\cite{kitaev95} wrapped in a number-theoretic cloak; and Hamiltonian simulation~\cite{lloyd96} and Harrow--Hassidim--Lloyd's algorithm for solving sparse linear systems~\cite{hhl09} both fall under the large umbrella of quantum singular value transformation~\cite{gslw18}.
These general subroutines are best described as solving unitary problems: the input is an arbitrary quantum process $U$ which we have black-box access to, and the output is some function of $U$, whether it be a classical bitstring, a quantum state, or a quantum circuit itself.
As a consequence, we have seen a shift towards treating the unitary problem as part of the foundation for quantum computation, particularly on the side of complexity, where recent works have begun developing a complexity theory of unitary problems~\cite{bempqy23,LMW24,zhandry25}.


When formalizing the notion of a unitary problem, the first question we run into is what it means to have access to an input unitary.
A curious feature of many quantum algorithms is that they require more than just access to $U$.\footnote{
    Amplitude amplification and QSVT use queries to $U$ and $U^\dagger$.
    Phase estimation uses queries to $\controlled U$.
}
Most commonly, in addition to the unitary $U$ itself, quantum subroutines often require access to its inverse, $U^\dagger$, and/or its controlled version, $\controlled U$:
\begin{align*}
    \controlled U = \proj{0} \otimes I + \proj{1} \otimes U
    = \begin{bmatrix}
        I & \\ & U
    \end{bmatrix}.
\end{align*}
But what are the relationships between these different kinds of accesses?
When do we need stronger forms of access?
And when do weaker forms of access suffice?

These questions are of fundamental importance for understanding the broader theory of unitary problems.
But currently, we know surprisingly little about why these different types of access are needed.
As a result, access model subtleties often cause annoyance for researchers in the theory of quantum computation.
Let us consider the main focus of our work,\footnote{
    We discuss the situation regarding $U^\dagger$ in our concurrent work~\cite{tw25a}.
} $\controlled U$.
Issues regarding $\controlled U$ crop up in a few places.


First, we sometimes want lower bounds which hold when given access to $\controlled U$.
When we consider unitary problems, we typically imagine that the input $U$ is instantiated via a quantum circuit.
In this case, by replacing every gate of $U$'s circuit with a small, constant-sized implementation of its controlled version, we can implement $\controlled U$ with a constant factor overhead in gate count.
So, with a scalable quantum computer and a circuit implementation of $U$, it's reasonable to assume that the cost of applying $U$ is similar to that of applying $\controlled U$.
Consequently, if we have a hardness result for a unitary problem (say, hardness of breaking a cryptographic protocol), this hardness result had better hold when given $\controlled U$ as well, as otherwise the lower bound won't apply in real-world situations where we can look inside the black box and see its implementation.
Thus, we need to ask: when does $\controlled U$ not help?

Second, we sometimes want to run algorithms without using $\controlled U$.
When $U$ is instantiated by a long quantum circuit, 
the constant-factor overhead needed to implement $\controlled U$ might be expensive in practice.
Further, in more experimental settings, like those of quantum sensing, metrology, and learning, $U$ comes from nature, so getting access to $\controlled U$ can require a significantly more advanced experimental setup.
Quantum computation is, for now and for the foreseeable future, extremely resource-limited.
So, beyond being a technical curiosity for the quantum circuit designer, requiring special access for quantum circuits imposes genuine limitations on their downstream applications.
If we have an algorithm which uses $\controlled U$, we must ask: can we get away without it?


To understand why we need these stronger forms of access, we ask whether there is a particular kind of unitary problem which is only possible to solve with this strengthened access.
For $\controlled U$, it is well-known that such a problem exists, since controlled access helps to identify the global phase of $U$.
As an example, consider the following problem: given a unitary oracle $U$, decide whether it is the identity $I$ or its negation, $-I$.
With only access to $U$, this problem is impossible, as a quantum circuit querying only $U$ cannot distinguish the difference in global phase between the two cases.
On the other hand, this problem can be solved with just a single query to $\controlled U$.

Many kinds of unitary problems do not care about the global phase of $U$, however. 
In this work, we ask whether global phase is the \emph{only} reason why controlled unitaries are necessary:
\begin{center}
    Is $\controlled U$ ever useful for a problem which is invariant under global phase?
\end{center}
One might guess at first blush that the answer to this question is yes, $\controlled U$ is sometimes useful for phase-invariant problems.
A literature search reveals two forms of evidence for the power of controlled unitaries. 
First, many algorithms seem to need $\controlled U$ despite the global phase of $U$ seeming irrelevant to the problem being solved.
Such situations commonly appear when a subroutine, which requires $\controlled U$ because it depends on global phase, is used for a problem which ultimately doesn't depend on the global phase of $U$~\cite{BHMT02,hlm17}.
Second, if controlled unitaries indeed only help to distinguish global phase, then there should be some generic method to simulate controlled unitary queries with uncontrolled queries in some global phase-invariant sense.
Instead, what we see are broadly negative results, with impossibility results for simulating $\controlled U$ from queries to $U$~\cite{afcb14,gst24}.
Curiously, though, this evidence nevertheless does not rule out the possibility that the answer to our question is no.

\subsection{Results}


We prove that $\controlled U$ does not help for problems which are invariant under global phase, answering our question in the negative.
To do this, we show how to simulate a circuit using $\controlled U$'s with a circuit using only $U$'s, where the simulation circuit outputs the original circuit's output on $\controlled (\varphi U)$ with a random phase $\varphi$.

\paragraph{Notation.}
We denote $\ii = \sqrt{-1}$.
For $q$ a positive integer, we let $\cyc_q \coloneqq \braces{e^{2\pi \ii k / q} \mid k = 0,1,\dots,q-1}$ be the group generated by a $q$-th root of unity.
Abusing notation, we let $\cyc_\infty \coloneqq \braces{e^{\ii \theta} \mid 0 \leq \theta < 2\pi}$ be the complex unit circle.
We use sans serif script to refer to registers of a quantum state: for example, $\ket{\psi}_{\reg{P}}\ket{\phi}_{\reg{RS}}$ denotes the system where $\ket{\psi}$ occupies register $\reg{P}$ and $\ket{\phi}$ occupies registers $\reg{R}$ and $\reg{S}$.
We drop the register notation when the set of registers is clear from context.
For a matrix $M$, $M^*$, $M^\trans$, and $M^\dagger$ denote $M$'s conjugate, transpose, and conjugate transpose, respectively.

\begin{theorem}[Wresting quantum control from a quantum circuit] \label{thm:main}
    Consider a quantum circuit (possibly with ancilla, possibly with measurements and partial traces) which queries $\controlled U$, $\controlled U^\dagger$, $\controlled U^*$, and $\controlled U^{\trans}$ a cumulative total of $n$ times and produces the output state $\rho(U)$.
    
    Then there is a method to convert this circuit into one which outputs $\wh{\rho} = \E_{\phi \sim \cyc_\infty}\bracks{\rho(\phi U)}$ and which replaces every controlled query with its un-controlled variant.
    Further, for every $q > n$, $\wh{\rho} = \E_{\phi \sim \cyc_q}\bracks{\rho(\phi U)}$.

    When $U$ is a unitary on $\log_2(d)$ qubits, this simulation uses $\ceil{\log_2(n)} + 2\log_2(d)$ additional qubits and $\bigO{n(\log(n) + \log(d))}$ additional gates.
    (This gate and space overhead can be reduced; see \cref{prop:less-overhead}.)
\end{theorem}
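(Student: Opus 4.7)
My plan is to share two auxiliary systems across all queries---a \emph{phase register} $P$ of some dimension $q > n$, initialized in $\ket{\pi_0} \coloneqq \frac{1}{\sqrt{q}}\sum_{k=0}^{q-1} \ket{k}_P$, and a pair $(A,B)$ of two $\log_2(d)$-qubit registers in the maximally entangled state $\ket{\Phi^+}_{AB} \coloneqq \frac{1}{\sqrt{d}}\sum_j \ket{j}_A\ket{j}_B$---and to replace each controlled query $\controlled U^\bullet$ (with $\bullet \in \{1,\dagger,\ast,\trans\}$) on control $c$ and target $T$ by a small gadget using one uncontrolled call to $U^\bullet$. The gadget has three steps: (i) swap $T$ with $A$ (if $\bullet \in \{1,\dagger\}$) or with $B$ (if $\bullet \in \{\ast,\trans\}$), controlled on $c$ being $\ket{0}$; (ii) apply $U^\bullet$ uncontrolled on $T$; (iii) undo that swap and then apply the phase $\omega^{s_\bullet k}$ on the branch $\ket{c=1}\ket{k}_P$, where $\omega = e^{2\pi\ii/q}$ and $s_\bullet = +1$ for $\bullet \in \{1,\trans\}$ and $s_\bullet = -1$ for $\bullet \in \{\dagger,\ast\}$. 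At the end of the circuit we trace out $P$, $A$, and $B$. Intuitively, the conditional swap diverts $T$ into the ancilla on exactly those branches where the true $\controlled U^\bullet$ would act as the identity, so the uncontrolled $U^\bullet$ lands harmlessly in the ancilla; the controlled phase records the exponent of $\phi$ that $(\phi U)^\bullet$ would have contributed; and the Choi-state identity $(U \otimes I)\ket{\Phi^+} = (I \otimes U^\trans)\ket{\Phi^+}$ keeps the ancilla evolution commutative.

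The analysis is a branch-by-branch computation. Indexing computational-basis assignments to the original control qubits by $b$ and setting $M_b \coloneqq \sum_i s_{\bullet_i}\, c_i^{(b)}$, I would verify three facts just before the final trace: (a) the target registers hold the same state as under the ideal circuit with true $\controlled U^\bullet$ gates; (b) the ancilla pair is in $(U^{k_b} \otimes I)\ket{\Phi^+}_{AB}$ for an integer $k_b$ that depends on $b$ only through $M_b$, explicitly $k_b = N - M_b$ for the circuit-fixed constant $N = \sum_i s_{\bullet_i}$; and (c) the phase register is in the Fourier vector $\ket{\pi_{M_b}} \coloneqq \frac{1}{\sqrt{q}}\sum_k \omega^{k M_b}\ket{k}_P$. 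Given (a)--(c), tracing out $P$ kills every cross term with $M_b \not\equiv M_{b'} \pmod q$ by orthogonality of Fourier vectors, while every surviving cross term also has $k_b = k_{b'}$ and therefore ancilla overlap exactly $1$. On the other side, $\rho(\phi U)$ differs from $\rho(U)$ on each branch by exactly the scalar $\phi^{M_b}$, so $\E_{\phi \sim \cyc_q}[\rho(\phi U)]$ retains precisely the same cross terms with the same weights, matching the simulation output. Finally, each matrix element of $\rho(\phi U)$ is a Laurent polynomial in $\phi$ with exponents in $\{-n,\ldots,n\}$, so averaging over any $\cyc_q$ with $q > n$ coincides with averaging over $\cyc_\infty$.

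The delicate step is (b), and it is what forces the Choi-state ancilla rather than a plain $\ket{0}$ ancilla. If the gadget always swapped $T$ with $A$, the ancilla would accumulate a word in the generally non-commuting operators $U, U^\dagger, U^\ast, U^\trans$, and two branches with equal $M$ could produce different such words, breaking the required ancilla overlap. Routing $\ast$ and $\trans$ queries through $B$ and invoking the identity above lets each $B$-side $U^\trans$ (respectively $U^\ast$) be reinterpreted as an $A$-side $U$ (respectively $U^\dagger$), so only the commuting operators $U$ and $U^\dagger$ accumulate and the final $A$-side exponent depends only on the signed count $M_b$. Once (b) is in hand, the overhead count is immediate: $\lceil \log_2 q\rceil + 2\log_2 d$ extra qubits, two CSWAPs on $\log_2 d$ qubits and one phase controlled by $c$ and $P$ per query, totalling $\bigO{n(\log n + \log d)}$ gates. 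The sharper overhead promised by \cref{prop:less-overhead} should come from compressing the Choi state or reusing registers across queries.
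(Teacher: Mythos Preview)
Your proposal is correct and essentially mirrors the paper's proof: the paper uses the same maximally-entangled hold-register pair $(\reg{H},\reg{H}^\trans)$ with the same Choi-state/ricochet argument to make the ancilla state depend only on the signed count $M_b$, and its counter register $\reg{K}$ (initialized to $\ket{0}$ and hit by controlled-$\Add(\pm1)$) is just the Fourier dual of your phase register $P$ (initialized to $\ket{\pi_0}$ and hit by controlled-$Z^{\pm1}$). The only other cosmetic difference is that the paper's gadget controls the SWAP on $c=1$ and places the uncontrolled $U^\bullet$ on the hold wire, whereas you control on $c=0$ and place it on the target wire---the resulting net unitary, the Feynman-path analysis, and the overhead counts all coincide.
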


We found this result quite surprising, as it cuts against an existing body of work on simulating $\controlled U$'s with $U$'s.
It is not possible to simulate a query to $\controlled (\phi U)$ for any $\phi \in \cyc_{\infty}$ given a query to $U$~\cite{afcb14}, because of issues with the global phase on $U$.
This is true even when the phase $\phi$ is allowed to depend (deterministically) on $U$.
This might be concerning: why can't we build the quantum version of an ``if clause'' (if the control qubit is $1$, apply $U$; else, apply $I$) given black-box access to $U$?
Classical if statements can certainly be implemented in this way, so is this a quirk of quantum mechanics?

As it turns out, this barrier is not fundamental: if we don't care about the phase on $U$ in the controlled unitary, with one query to $U$ we can implement the non-unitary channel which corresponds to applying $\controlled (\varphi U)$ for a random $\varphi$.
The key observation is that we can design this implementation so that the random phase $\varphi$ in the controlled unitary is \emph{consistent across multiple queries}, so a circuit using this implementation can expect the same controlled unitary every time it queries it.
A weaker version of \cref{thm:main} is shown by Sheridan, Maslov, and Mosca~\cite{smm09}; the main difference is that, with their version, the random $\varphi$ is chosen from a distribution which is $U$-dependent, whereas our distribution is a uniform phase from some $\cyc_q$.
We discuss this prior work more in \cref{subsec:prior}.
Our goal with this work is to popularize their idea and present it in as wide a scope as possible, in order to illuminate its broader implications.


The specific way to ``decontrol'' the circuit can be seen in \cref{fig:simulation,fig:conversion}.
The implication of this is that if we have a quantum circuit which computes something about $U$ which is invariant under global phase, then we would be perfectly happy with the output state $\rho(\phi U)$ averaged over a random $\phi$.
So, we can use the decontrolled circuit instead of the controlled one to perform our computation.

\subsection{Which problems do not need controlled unitaries?} \label{subsec:control}

Let us make this precise with an example about state preparation unitaries.
We call $U$ a state preparation unitary of a state with density matrix $\sigma$ if $U\ket{0}$ is a purification of $\sigma$, meaning that we can trace out a subsystem of $U\ket{0}$ to get $\sigma$.

\begin{corollary}[Quantum control of state preparation unitaries does not help] \label{cor:state-prep}
    Consider the following distinguishing task on quantum states.
    Let $\calH_0$ and $\calH_1$ be disjoint classes of quantum states.
    Given a state preparation unitary $U$ for a state $\sigma$ in one of these classes, decide which class it came from with success probability $\geq \tfrac{2}{3}$.
    Suppose this task can be solved in $n$ queries, given controlled access to $U$ and its inverse, $\controlled U$ and $\controlled U^\dagger$.
    Then it can also be solved with $n$ queries to $U$ and $U^\dagger$.
\end{corollary}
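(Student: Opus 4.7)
The plan is to feed the given controlled-query circuit directly into Theorem~\ref{thm:main} and observe that the phase averaging that appears at the output is invisible to the distinguishing task. Let $B$ denote the hypothesized $n$-query algorithm, which uses $\controlled U$ and $\controlled U^\dagger$ and outputs a state $\rho(U)$ from which the classifier's guess is read out by measuring a designated output qubit. Applying Theorem~\ref{thm:main} to $B$ produces a new circuit $B'$ that uses $n$ queries to $U$ and $U^\dagger$ and whose output state is $\wh{\rho}(U) = \E_{\phi \sim \cyc_\infty}[\rho(\phi U)]$.

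The key observation is that state preparation is a \emph{global-phase-invariant} property of unitaries: if $U\ket{0}$ purifies $\sigma$, then $(\phi U)\ket{0} = \phi\cdot U\ket{0}$ differs from $U\ket{0}$ only by an overall phase, so it has the same reduced density matrix on every subsystem. Hence, for every $\phi \in \cyc_\infty$, the unitary $\phi U$ is itself a valid state preparation unitary for the same $\sigma$, and by the assumed correctness of $B$, the success probability of $B$ on input $\phi U$ is at least $\tfrac{2}{3}$ for every single $\phi$, not only $\phi = 1$.

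Since measurement of a fixed designated qubit is linear in the output density matrix, the success probability of $B'$ on input $U$ equals the $\phi$-average of the success probabilities of $B$ on $\phi U$; since each of these is at least $\tfrac{2}{3}$, so is the average, and $B'$ solves the task with the claimed $n$ queries to $U$ and $U^\dagger$. I do not anticipate any genuine obstacle here beyond invoking Theorem~\ref{thm:main} correctly and noting the phase-invariance of the input class; the corollary is essentially a packaging of the theorem for a particularly natural special case, and the same short argument applies verbatim whenever the set of valid inputs is closed under $U \mapsto \phi U$.
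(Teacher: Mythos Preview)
Your proposal is correct and essentially identical to the paper's own proof: both apply Theorem~\ref{thm:main} to the controlled-query circuit, observe that $\phi U$ is still a state preparation unitary for the same $\sigma$, and conclude that the phase-averaged output succeeds with probability at least $\tfrac{2}{3}$. Your version is slightly more explicit about the linearity-of-measurement step, but the structure and ideas are the same.
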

\begin{proof}
Let $\rho(U)$ be the output of the circuit which solves the distinguishing task using $n$ queries to $\controlled U$ and $\controlled U^\dagger$.
Because the circuit is allowed measurements and partial traces, we can assume that $\rho(U)$ is the output guess: $\rho(U)$ is a random classical bit which identifies the correct class with probability $\geq 2/3$.
Then \cref{thm:main} gives a circuit which uses $n$ queries to $U$ and $U^\dagger$ and outputs $\rho(\varphi U)$ for a random $\varphi \in \cyc_\infty$.
If $U$ is a state preparation unitary for a state $\sigma$, then so is $\varphi U$ for any phase $\varphi \in \cyc_\infty$.
So, this output $\rho(\varphi U)$ also identifies the correct class with probability $\geq 2/3$.
\end{proof}

The consequences of our main theorem can be fairly non-obvious.
In section 5.1.5 of the property testing survey of Montanaro and de Wolf~\cite{MdW16}, they give an algorithm for property testing the commutativity of two unitary matrices $U$ and $V$, which uses $\poly(1/\eps)$ many queries to $\controlled U$ and $\controlled V$.
They ask whether controlled queries are needed~\cite[Question 10]{MdW16}; a corollary of our work is that they are not.
\begin{corollary}[Quantum control does not help for property testing commutativity] \label{cor:commutativity}
    Consider the following task: for $d \times d$ unitary matrices $U$ and $V$, decide whether $UV - VU = 0$ or $\frac{1}{\sqrt{d}} \fnorm{UV - VU} = (\frac{1}{d}\sum_{i,j = 1}^d \abs{\bra{i}(UV - VU)\ket{j}}^2)^{1/2} > \eps$ with success probability $\geq 2/3$.
    Then if this task can be solved with $n$ queries to $\controlled U$ and $\controlled V$, it can also be solved with $n$ queries to $U$ and $V$.
\end{corollary}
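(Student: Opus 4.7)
The plan is to reduce the corollary to \cref{thm:main} applied in turn to each of the two input unitaries. The key observation is that the distinguishing task posed in the statement is invariant under independent global phases on $U$ and $V$: for any $\varphi, \psi \in \cyc_\infty$, setting $U' = \varphi U$ and $V' = \psi V$ gives
\begin{align*}
    U'V' - V'U' = \varphi\psi\,(UV - VU),
\end{align*}
so $\frac{1}{\sqrt d}\fnorm{U'V' - V'U'} = \frac{1}{\sqrt d}\fnorm{UV - VU}$. Both the ``commutes'' case and the ``$\eps$-far from commuting'' case are preserved under these rephasings, so any classifier that is correct on $(\varphi U, \psi V)$ is also correct on $(U, V)$.

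With this invariance in hand, I would apply the decontrolling procedure of \cref{thm:main} twice in succession. Starting from the assumed circuit that uses $n_U$ queries to $\controlled U$ (and its variants) and $n_V$ queries to $\controlled V$ (and its variants), with $n_U + n_V = n$, first treat the $V$-type queries as ordinary fixed gates and invoke \cref{thm:main} on $U$ to replace every $\controlled U^{(\cdot)}$ by an uncontrolled $U^{(\cdot)}$. This yields a circuit with output $\E_{\varphi}[\rho(\varphi U, V)]$ that still performs $n_V$ controlled queries to $V$. Now invoke \cref{thm:main} a second time, this time taking $V$ as the black-box unitary and treating the freshly uncontrolled $U$-queries as fixed gates, to obtain a circuit that uses only $U$, $U^\dagger$, $V$, $V^\dagger$ a total of $n$ times and outputs $\E_{\varphi,\psi}[\rho(\varphi U, \psi V)]$.

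Finally, since each instance $(\varphi U, \psi V)$ appearing in this mixture has the same ``commutes vs.\ $\eps$-far'' classification as $(U, V)$, and the original circuit succeeds with probability at least $2/3$ on every such instance, the mixture also succeeds with probability at least $2/3$, completing the reduction. The only subtle point to check is that \cref{thm:main} genuinely treats arbitrary non-query operations between queries as an uninterpreted part of the circuit, so that the first application is undisturbed by the still-present $\controlled V$ blocks and the second application is undisturbed by the now-uncontrolled $U$ blocks; this is immediate from the theorem's statement, which places no structural assumptions on the gates interleaved with the queries.
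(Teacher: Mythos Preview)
Your proposal is correct and follows essentially the same approach as the paper's proof: apply \cref{thm:main} once for $U$ and once for $V$, then invoke the phase invariance of the commutativity measure to conclude that the resulting mixture over $(\varphi U,\psi V)$ still succeeds with probability at least $2/3$. You are slightly more explicit than the paper about treating the other oracle as a fixed gate during each application of \cref{thm:main}, but this is exactly the intended reading.
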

\begin{proof}
Let $\rho(U, V)$ be the output of the circuit which solves the distinguishing task using $n$ queries to $\controlled U$ and $\controlled V$.
Because the circuit is allowed measurements and partial traces, we can assume that $\rho(U, V)$ is the output guess: $\rho(U, V)$ is a random classical bit which, with probability $\geq 2/3$, is $0$ if $\frac{1}{\sqrt{d}}\fnorm{UV - VU} = 0$ and $1$ if $\frac{1}{\sqrt{d}}\fnorm{UV - VU} > \eps$.
Then by applying \cref{thm:main} twice, we get a circuit which uses $n$ queries to $U$ and $V$ and outputs $\rho(\varphi U, \phi V)$ for random $\varphi, \phi \in \cyc_\infty$.
Because the non-commutativity measure being tested is invariant under global phase, meaning that $\frac{1}{\sqrt{d}}\fnorm{(\varphi U)(\phi V) - (\phi V)(\varphi U)} = \frac{1}{\sqrt{d}}\fnorm{UV - VU}$, the output of this circuit $\rho(\varphi U, \phi U)$ will be the correct output bit with probability $\geq 2/3$.
\end{proof}


By applying \cref{thm:main} as shown above, we can conclude that many classes of problems do not need controlled access.
The general principle is as follows: sometimes, unitary problems ask for information about the \emph{unitary matrix} $U$; others are more naturally phrased as a problem about the \emph{unitary channel} $\sigma \mapsto U \sigma U^\dagger$.
For problems on unitary matrices, there is a canonical choice of global phase which therefore specifies a particular $\controlled U$, so one can benefit from having controlled access.
For problems on unitary channels, there is no canonical choice of global phase, and so access to any $\controlled(\varphi U)$ should be equally helpful.
But if we are indifferent to the choice of $\controlled(\varphi U)$, a random one should suffice, and \cref{thm:main} shows that this can be simulated with uncontrolled queries.
Problems about \emph{unitary channels} can be decontrolled, while problems about \emph{unitary matrices} may not be.

In other words, \emph{$\controlled U$ does not help for any ``physical'' problem involving $U$}.
These sorts of problems only care about $U$ as a unitary channel, and so do not depend on the global phase of $U$.
As discussed in \cref{cor:state-prep}, this includes problems where $U$ is a state preparation unitary.
We can also make the dual observation, that quantum control doesn't help for unitaries which prepare observables.
In the literature around shadow tomography and gentle measurement, we are interested in understanding POVMs $\braces{A_1, I - A_1}, \dots, \braces{A_m, I - A_m}$, which are typically specified to take the form of $A_i = U_i^\dagger \Pi U_i$, where $\Pi = \proj{0} \otimes I$ and $U_i$ is a unitary we have black-box access to.
(Projector POVMs can be expressed in this form without ancilla, and general POVMs can be expressed in this form with some ancilla.)
Because quantum mechanics is invariant under global phase, so are these observables: $A_i = (\varphi U_i)^\dagger \Pi (\varphi U_i)$ for any $\varphi \in \cyc_\infty$.
So, access to quantum control does not help for these problems, in terms of number of queries to the $U_i$'s.
This includes the quantum OR problem~\cite{hlm17,bwb22} and shadow tomography~\cite{aaronson20,bo24}; algorithms for these problems, some of which use $\controlled U_i$'s, can be decontrolled.

Below, we describe in more detail the generality to which \cref{thm:main} holds, and therefore, the types of problems for which $\controlled U$ does not help.

\begin{enumerate}
    \item The theorem can be repeated, as done in \cref{cor:commutativity}.
        Given a circuit which uses $\controlled U$ and $\controlled V$, we can produce, using only $U$ and $V$, the output of the circuit applied with $\controlled(\phi U)$ and $\controlled(\varphi V)$ for $\phi, \varphi$ independent random phases.
    \item As mentioned in the theorem statement, the result works even for removing controlled access to $U^\dagger$, $U^*$, and $U^\trans$, so controlled unitaries do not help even when these resources are used.
        Note that this is different from applying the theorem to $\controlled U$, $\controlled U^\dagger$, etc.\ individually, since the output simulates the circuit with a random phase which is consistent across the queries: $\controlled (\varphi U)$, $\controlled (\varphi U)^\dagger$, and so on.
    \item The output need not be a single bit.
        For example, the goal of unitary channel tomography is to output an estimate of the unitary channel $\rho \mapsto U \rho U^\dagger$~\cite{hkot23}.
        Since this unitary channel is invariant under the global phase of $U$, $\controlled U$ does not help for this problem.
    \item The output need not be classical.
        For example, for the task of fast-forwarding time evolution, that of preparing $U^{k}\proj{0} U^{-k}$ given $U$, $\controlled U$ does not help.
    \item The input need not be worst case.
        Our result holds for average-case problems too: in \cref{cor:commutativity}, if we only need to be correct over a distribution over $U$ and $V$, then quantum control still would not help.
        In general, quantum control does not help to distinguish unitary ensembles which are phase invariant, meaning that the likelihood of sampling $U$ and $\varphi U$ are equal.
        We will see this being used in our application to PRUs (\cref{cor:pru}).
    \item Even if a circuit needs global phase for a subroutine, if the final outcome is phase-invariant, then controls can be removed.
        For example, the original algorithm for amplitude estimation~\cite{BHMT02} uses phase estimation as a subroutine.
        Phase estimation depends on global phase, so the $\controlled U$'s cannot be removed for that task.
        However, the goal of amplitude estimation is to estimate $\norm{(\proj{0} \otimes I) U \ket{0}}$, which is invariant under the phase of $U$.
        So, the $\controlled U$'s can be removed by running \cref{thm:main} to the full algorithm, not just the phase estimation subroutine.
        We note that it was already known that amplitude estimation does not need $\controlled U$'s~\cite{ar20}, but our work gives a new route for showing this.
\end{enumerate}

\paragraph{Decontrolling to increase efficiency.}
\cref{thm:main} may have utility in settings where we would like to trade gate count for qubit count.
For example, consider a circuit which uses $\controlled U$ $n$ times, where $U$ is a circuit which consists of $T$ two-qubit gates (assuming all-to-all connectivity).
When $U$ is a long quantum circuit, to implement $\controlled U$, every gate of $U$ needs to be replaced with its controlled version.
This introduces overhead: applying the controlled unitaries will use at least $2nT$ two-qubit gates, since each controlled two-qubit gate needs to be compiled down into at least two two-qubit gates.
In fact, the overhead may be more: for example, a controlled CNOT gate (a.k.a.\ a Toffoli gate) requires $5$ two-qubit gates to implement it~\cite{sw95,ydy13}.
On the other hand, the decontrolled circuit will replace these controlled unitaries with the unitary themselves, giving a final gate count of $nT + \bigO{n(\log(n) + \log(d))}$, counting the additional simulation overhead.
When $T$ is large, then the decontrolled circuit is more gate-efficient, at the cost of $\log_2(d)$ qubits of space overhead (using \cref{prop:less-overhead} to reduce the overhead).
The details of this example are not crucial: in general, there may be resources (e.g.\ T-gate count) for which $\controlled U$ costs a constant factor more than $U$; so, whenever the dominant cost of the circuit comes from $U$, it would be worthwhile to perform a decontrolled circuit instead, which incurs a lower-order overhead.
Our result may also be useful for implementing a quantum control qubit in experimental settings (though note that experimental setups can be often modified to embed qubits in a higher-dimensional space, leading naturally to a $\controlled U$ implementation~\cite{Lan+08,zrkzpl11,fddb14}).

\subsection{Which problems do need controlled unitaries?}

We conclude from our theorem that controlled access only helps when the global phase of the unitary matters in some way.
For some problem domains, the global phase is important, and consequently, algorithms for these domains rely on controlled access.
We discuss the most notable instances now.\footnote{
    We should mention a couple omissions---notable algorithms which use controls when they do not need to.
    We have already discussed the algorithms for amplitude estimation~\cite{BHMT02} and the quantum OR problem~\cite{hlm17}: these use controlled $U$, but this dependence can be removed.
    The Hadamard test~\cite{ajl06} estimates the real part of the inner product $\braket{\psi}{\phi}$, given controlled access to state preparation unitaries for $\ket{\psi}$ and $\ket{\phi}$.
    This task depends on the global phase, so the controlled access is necessary, but it is generally used to estimate the phase-invariant $\abs{\braket{\psi}{\phi}}$, which can be done using the control-less swap test~\cite{bcww01}.
    In all of these cases, we could use \cref{thm:main} to remove the controlled queries in these circuits, but other algorithms have also been discovered for these problems which do not use $\controlled U$.
}

First, we consider the most notable algorithm which uses controls: phase estimation~\cite{kitaev95}.
At a high level, this procedure maps an eigenstate of $U$, $\ket{v}$, to $\ket{v}\ket{\lambda}$, where $\lambda$ is an estimate of the corresponding eigenvalue.
This task does depend on the global phase of $U$, since the spectrum of $U$ depends on its global phase.
If we were to apply \cref{thm:main} to this circuit, we would get a procedure which maps $\ket{v} \mapsto \ket{v}\ket{\lambda + \varphi}$ where $\varphi$ is a random phase that is consistent across all of the eigenvalues.
This can be enough for some purposes, as existing work notes~\cite{ccgmss24}.

Second, we consider phase oracles of Boolean functions: for a function $f: [d] \to \braces{0,1}$, its phase oracle is the unitary matrix $Q$ satisfying $Q\ket{x} = (-1)^{f(x)} \ket{x}$.
Since we care about the distinction between $f(x)$ and $1 - f(x)$, which have phase oracles $Q$ and $-Q$, the global phase matters.
(If we do not care about this distinction, then we can simulate $\controlled (\pm Q)$ for a random sign using \cref{thm:main} and \cref{prop:less-overhead}(2).)
This explains why $\controlled Q$ appears in these algorithms.
However, note that we can implement $\controlled Q$ from the (uncontrolled) bit-flip oracle, $O_f \ket{i}\ket{b} \mapsto \ket{i}\ket{b + f(i)}$.
In fact, access to $\controlled Q$ and $O_f$ are equivalent, since they are equal up to conjugation by a Hadamard.
So, if we take the bit-flip oracle as the standard oracle, then controlled access to the phase oracle comes by default.

Third, the linear combinations of unitaries technique, first introduced in the context of Hamiltonian simulation~\cite{cw12}, takes controlled unitaries for $U$ and $V$ and simulates a non-unitary quantum operation of the form $\alpha U + \beta V$~\cite{gslw18}.
This depends on global phase (or at least the relative phase between $U$ and $V$), and so the controlled access cannot be removed.
A related task is to, given a controlled reflection $\controlled R$, measure a state with the POVM $\braces{\Pi, I - \Pi}$, where $\Pi = (I + R)/2$.
This task depends on global phase, in the sense that the POVM labels we attach to $\Pi$ and $I - \Pi$ depend on the eigenvalues of $R$.
Without the labels, i.e.\ if we just wish to apply the map $\sigma \mapsto \Pi \sigma \Pi + (I - \Pi) \sigma (I - \Pi)$, this task can be done with $R$ alone.

Finally, $\controlled U$ queries can help when we are given continuous-time access to a unitary.
In this setting, our access takes the form of $U(t) = e^{-\ii H t}$: we can query $U(t)$ for any $t \in \mathbb{R}$, incurring an evolution time cost of $\abs{t}$.
We can still decontrol circuits which use continuous queries.
For a circuit which calls $\controlled(U(t))$ with a total evolution time of $n$, we can discretize it into one which only calls $\controlled U(\delta)$ and $\controlled U(-\delta) = \controlled U(\delta)^\dagger$ for a sufficiently small $\delta$, and then call \cref{thm:main} to replace these applications with uncontrolled queries.
This works if the original circuit can tolerate Hamiltonians of the form $H + \theta I$ for a potentially large identity term, $\abs{\theta} = \poly(n)$.
Algorithms which use continuous-time controlled queries, like ground state preparation via phase estimation, are typically not designed for such Hamiltonians, though.
Nevertheless, with continuous queries, particularly when the Hamiltonian $H$ is nice, we know that controlled access cannot help much: there are procedures to simulate $\controlled U(t)$ with $U(t)$~\cite{dnsm19,cbw24}, and we can always simulate $\controlled U(t)$ by learning the Hamiltonian and implementing it directly~\cite{blmt24c}.

\subsection{Upgrading the security of pseudorandom unitaries}

Especially following the recent work of Zhandry~\cite{zhandry25},
there has been strengthened interested in showing security of cryptographic primitives against adversaries who have access not just to a cryptographic unitary~$U$, but also to $U^\dagger$, $U^*$, $U^\trans$, and their controlled variants.
One application of our work is that it provides a simple, generic technique for upgrading a cryptographic scheme which is secure against uncontrolled access to~$U$ into a stronger scheme  which is secure also against controlled access.
To illustrate this, we will consider the cryptographic primitive of pseudorandom unitaries, first introduced by Ji, Liu, and Song~\cite{jls18}.

\begin{definition}[Pseudorandom unitaries]
    For each integer $n \geq 1$,
    let $K_n$ be a set of keys
    and $\{U_{n, k}\}_{k \in K_n}$ be a set of $n$-qubit unitaries.
    Then the set of unitaries $\mathcal{U} = \{U_{n, k}\}_{n, k \in K_n}$ is a \emph{pseudorandom unitary (PRU)} if it satisfies the following two properties.
    \begin{itemize}
        \item (Efficient generation): There exists a quantum algorithm which, on input $n \geq 1$ and $k \in K_n$, computes $U_{n, k}$ in time $\poly(n)$.
        \item (Indistinguishability from Haar): For any efficient, $\poly(n)$-time oracle circuit $A(\cdot)$,
        \begin{equation*}
            \Big| \Pr_{k \sim K_n}[\text{$A(U_{n, k})$ accepts}] - \Pr_{U \sim \mathrm{Haar}_n}[\text{$A(U)$ accepts}]\Big| \leq \mathrm{negl}(n),
        \end{equation*}
        where $\mathrm{negl}(n)$ is any negligible function in $n$.
        Here, $\mathrm{Haar}_n$ refers to the Haar measure on $n$-qubit unitaries.
    \end{itemize}
    Typically, the security is proven with respect to adversaries $A()$ which only have oracle access to $U$, in which case $\mathcal{U}$ is a \emph{standard PRU}.
    If it is also secure against adversaries which have oracle access to $U$ and $U^{\dagger}$, then it is a \emph{strong} PRU.
\end{definition}

Ma and Huang~\cite{mh25} showed the existence of strong PRUs assuming standard cryptographic assumptions.
We show that they can be upgraded to even \emph{stronger} PRUs by the addition of a single global phase.

\begin{corollary}[Stronger PRUs] \label{cor:pru}
    Let $\mathcal{U} = \{U_{n, k}\}_{n, k \in K_n}$ be a PRU secure against some subset of $\{U, U^\dagger, U^*, U^\trans\}$ queries.
    Let $q_n = 2^n$, set $K_n^+ = \{(k, \varphi) \mid k \in K_n, \varphi \in \cyc_{q_n}\}$,
    and define $U_{n, (k, \varphi)} = \varphi \cdot U_{n, k}$ for each $(k, \varphi) \in K_n^+$.
    Then the set of unitaries $\mathcal{U}^+ = \{U_{n, (k, \varphi)}\}_{n, (k, \varphi) \in K_n^+}$ is a PRU secure against the corresponding subset of $\{\controlled U, \controlled U^\dagger, \controlled U^*, \controlled U^\trans\}$ queries.
\end{corollary}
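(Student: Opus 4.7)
The plan is to reduce the controlled-access security of $\mathcal{U}^+$ to the uncontrolled-access security of $\mathcal{U}$ by a contrapositive argument that applies Theorem~\ref{thm:main}. Suppose $A$ is any $\poly(n)$-time adversary that makes $T = T(n) = \poly(n)$ cumulative queries to some subset $S \subseteq \{\controlled U, \controlled U^\dagger, \controlled U^*, \controlled U^\trans\}$. I would choose $q = q_n = 2^n$, which exceeds $T(n)$ for all sufficiently large $n$, and apply Theorem~\ref{thm:main} to $A$. This yields an adversary $A'$ that makes only uncontrolled queries from the corresponding subset of $\{U, U^\dagger, U^*, U^\trans\}$ and whose output distribution, on input unitary $V$, is exactly $\widehat{\rho}_A(V) = \E_{\varphi \sim \cyc_{q_n}}[\rho_A(\varphi V)]$. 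The overhead is $\ceil{\log_2 T} + 2n$ qubits and $O(T(\log T + n)) = \poly(n)$ gates, so $A'$ remains efficient.

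Next I would compare the two sides of the indistinguishability comparison for $A'$ against $\mathcal{U}$. On the Haar side, the Haar measure on $U(2^n)$ is invariant under left-multiplication by any unitary, in particular by $\varphi I$, so $\varphi V$ has the same law as $V$ for every fixed $\varphi$, giving
\[
\Pr_{V \sim \mathrm{Haar}_n}[A'(V) \text{ accepts}] \;=\; \E_{\varphi \sim \cyc_{q_n}} \Pr_{V \sim \mathrm{Haar}_n}[A(\varphi V) \text{ accepts}] \;=\; \Pr_{V \sim \mathrm{Haar}_n}[A(V) \text{ accepts}].
\]
On the pseudorandom side, the key set $K_n^+$ is the product $K_n \times \cyc_{q_n}$, so sampling $k \sim K_n$ and then $\varphi \sim \cyc_{q_n}$ realizes the uniform distribution on $\mathcal{U}^+$:
\[
\Pr_{k \sim K_n}[A'(U_{n,k}) \text{ accepts}] \;=\; \E_{k,\varphi} \Pr[A(\varphi U_{n,k}) \text{ accepts}] \;=\; \Pr_{(k,\varphi) \sim K_n^+}[A(U_{n,(k,\varphi)}) \text{ accepts}].
\]
Subtracting the two identities, $A'$'s distinguishing advantage against $\mathcal{U}$ (with uncontrolled queries) equals $A$'s advantage against $\mathcal{U}^+$ (with controlled queries). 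The PRU assumption on $\mathcal{U}$ then forces this advantage to be negligible, completing the contradiction.

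The only bookkeeping that needs attention is the efficient-generation clause for $\mathcal{U}^+$: since $\cyc_{2^n}$ is parametrized by $n$-bit integers, the augmented key $(k, \varphi)$ has polynomial length, and the circuit for $\varphi U_{n, k}$ is obtained from the (assumed efficient) circuit for $U_{n, k}$ by composing with a scalar $\varphi$, which adds no cost to an uncontrolled implementation. I do not anticipate a genuine obstacle: Theorem~\ref{thm:main} is essentially purpose-built for this reduction, since it converts controlled queries into uncontrolled queries at the cost of a uniform random phase drawn from exactly the group $\cyc_{q_n}$ that parametrizes the phase key in $\mathcal{U}^+$.
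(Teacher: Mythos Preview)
Your proposal is correct and follows essentially the same argument as the paper: apply Theorem~\ref{thm:main} (with $q = q_n = 2^n > T(n)$ for large $n$) to convert the controlled-query adversary $A$ into an uncontrolled-query adversary $A'$ of the same efficiency, then use Haar invariance under scalar multiplication and the product structure $K_n^+ = K_n \times \cyc_{q_n}$ to equate $A'$'s advantage against $\mathcal{U}$ with $A$'s advantage against $\mathcal{U}^+$. The efficient-generation check you include is also handled the same way in the paper.
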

\begin{proof}
    Suppose for example that $\mathcal{U}$ is a strong PRU, meaning it is secure against $U$ and $U^\dagger$ queries.
    Then we claim that $\mathcal{U}^+$ is a PRU secure against $\controlled U$ and $\controlled U^\dagger$ queries.
    First, we note that $\mathcal{U}^+$ satisfies the efficient generation property, because for each unitary $U_{n, (k, \varphi)} = \varphi \cdot U_{n, k}$, both $U_{n, k}$ and the global phase $\varphi$ can be computed in $\poly(n)$ time.

    As for the indistinguishability from Haar property, let $A(\cdot)$ be a $\poly(n)$-time oracle circuit which uses queries to $\controlled U$ and $\controlled U^\dagger$. Then there exists a constant $n_0$ such that $A(U)$ makes less than $2^n$ queries for all $n \geq n_0$ and $n$-qubit unitaries~$U$.
    Then \cref{thm:main} gives a $\poly(n)$-time oracle circuit $B(U)$ which uses queries to $U$ and $U^\dagger$ and, on input $U$, accepts with the same probability as $A(\varphi U)$  for a random $\varphi \in \cyc_{q_n}$.
    As a result, we have that for all $n \geq n_0$,
    \begin{align*}
        &\Big| \Pr_{(k, \varphi) \sim K_n^+}[\text{$A(U_{n, (k, \varphi)})$ accepts}] - \Pr_{U \sim \mathrm{Haar}_n}[\text{$A(U)$ accepts}]\Big|\\
        ={}& \Big| \Pr_{k \sim K_n, \varphi \sim \cyc_{q_n}}[\text{$A(\varphi U_{n, k})$ accepts}] - \Pr_{U \sim \mathrm{Haar}_n, \varphi \sim \cyc_{q_n}}[\text{$A(\varphi U)$ accepts}]\Big|\\
        ={}& \Big| \Pr_{k \sim K_n}[\text{$B(U_{n, k})$ accepts}] - \Pr_{U \sim \mathrm{Haar}_n}[\text{$B(U)$ accepts}]\Big| \leq \mathrm{negl}(n).
    \end{align*}
    In the second line, we used that if $U$ is Haar distributed, then so is $\varphi U$ for any phase $\varphi$.
    In the third line, we used the indistinguishability from Haar property of $\mathcal{U}$ and the fact that $B$ is a $\poly(n)$-time oracle circuit.
    Hence, $\mathcal{U}^+$ itself satisfies the indistinguishability from Haar property, which completes the proof.
\end{proof}

We note that this proof crucially relies on the fact that \cref{thm:main} allows us to simulate queries to $\varphi U$ for a uniformly random $\varphi$. This means that the decontrolling strategy of Sheridan, Maslov, and Mosca~\cite{smm09} does not seem to help here, because it is not clear what distribution of phases one should ``upgrade'' the PRU with.
We believe the ability to easily upgrade security proofs is one of the most useful applications of our \cref{thm:main}.
Indeed, our original motivation for studying this problem came from wanting to upgrade the amplitude amplification and estimation lower bounds from our concurrent work~\cite{tw25a} to hold against controlled unitaries; we show how \cref{thm:main} can be used to do so in Remark 3.5 of that work.

\subsection{Relationship to prior work} \label{subsec:prior}

We now survey the existing literature on simulating controlled unitaries using uncontrolled unitaries.

\paragraph{Early results.}
In his paper introducing phase estimation, Kitaev notes a simple method to simulate $\controlled U$~\cite{kitaev95}: if we have a $+1$-eigenstate $\ket{\psi}$ such that $U\ket{\psi} = \ket{\psi}$, then we can implement the controlled unitary (the algorithm: controlled-SWAP with the eigenstate, apply $U$ to the ancilla register, controlled-SWAP).
By the same argument, if we have a general eigenstate $U \ket{\psi} = \lambda \ket{\psi}$, then we can implement $\controlled (\lambda^{-1} U)$ (up to a global phase of $\lambda$).
Since the eigenstate remains unchanged after applying the controlled unitary, we can perform this controlled unitary as many times as we like.

Kitaev's protocol requires knowing an eigenstate, which is hard to find in general.
This issue is surmountable, though.
Sheridan, Maslov, and Mosca observe that by initializing the ancilla to the maximally mixed state, we can interpret this as a random eigenstate, and so we can implement $\controlled (\lambda^{-1} U)$ for a random eigenvalue $\lambda$ of $U$~\cite{smm09}.
This gives a result very similar to ours, with the main difference being that the distribution over $\controlled (\varphi U)$ is not uniform, but instead a distribution which ensures that $\varphi U$ has a $+1$-eigenvalue.
Further, their version has a lower space and gate overhead: see \cref{prop:less-overhead}, where we present this version.
Our result has the nice property of giving a distribution over global phases which does not depend on the unitary being controlled; this allows us to prove an average-case statement like \cref{cor:pru}.
Further, this distribution being a uniformly random phase is the most natural, because of the principle of indifference.

\paragraph{Literature on quantum combs.}
There is a large body of work on higher-order quantum transformations (also called superchannels, supermaps, and combs~\cite{cdp08}), which asks: can we design a circuit which applies $U$ as a black box some number of times to perform some task?
When the task is \emph{controlization}, i.e.\ implementing some $\controlled(\phi U)$, this literature is broadly pessimistic.
Work of Ara\'{u}jo, Feix, Costa, and Brukner~\cite{afcb14} shows that one query to $U$ does not suffice to implement $\controlled (\phi U)$, for any fixed $\phi$ (which is allowed to depend on $U$).
This lower bound informs the rest of this literature, with algorithms for controlization~\cite{fddb14,bdp16,dnsm19,cbw24} only being demonstrated in very limited settings.
There is even a lower bound showing that, for a certain success criterion, implementing some $\controlled(\phi U)$ is impossible, even with arbitrarily many queries to $U$~\cite{gst24}.
Our work does not contradict these lower bounds, since our simulations are fundamentally non-unitary, producing a random $\controlled(\phi U)$ instead of any particular controlled query.
This is a natural guarantee for simulation: the quantum channel corresponding to $U$ does not specify a global phase, so we are happy selecting a random one to controlize.
The key observation we make is that this guarantee suffices for many kinds of problems which use controlled queries.

\paragraph{Simulating Choi states of $\controlled U$.}
Besides the work of Sheridan, Maslov, and Mosca, the most similar prior result we are aware of is contained a lemma of Kretschmer~\cite[Lemma 25]{kretschmer21}.
Kretschmer observes that Choi states of a random $\controlled (\varphi U)$ can be prepared given Choi states of $U$.
The distribution over $\varphi$ is the same as that given in our theorem, so this shows our result in the special case that the controlled queries are only used to prepare Choi states.
This special case suffices to prove \cref{cor:commutativity}, since that particular algorithm takes this form.
However, since Choi states of $V$ are in general much weaker than queries to $V$, it does not suffice for our other implications.


\section{Proofs}

To build intuition for our construction,
we begin by showing how to decontrol the following one-query circuit.
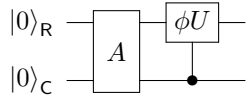
\begin{figure}[h]
\[ 
\Qcircuit @R=1em @C=1em {
\lstick{\ket{0}_{\reg{R}}}
    & \multigate{1}{A}
    & \gate{\phi U}
    & \qw  \\
\lstick{\ket{0}_{\reg{C}}}
    & \ghost{A}
    & \ctrl{-1}
    & \qw
}
\]
\caption{
    A simple one-query circuit.
}
\label{fig:simple}
\end{figure}

\noindent
Letting $\ket{\circuit(\controlled (\phi U))}$ denote the output of this circuit, our goal is to simulate the mixed state
\begin{equation*}
    \E_{\phi \sim \cyc_q} \bracks[\Big]{\proj{\circuit(\controlled (\phi U))}}
\end{equation*}
for some reasonably large number $q$.
In this case, we can write the output state as
\begin{align*}
\ket{\circuit(\controlled (\phi U))}
    = \controlled(\phi U) A \ket{0}
    = (I \otimes \ketbra{0}{0}_{\reg{C}}) \cdot A \ket{0} + (\phi U \otimes \ketbra{1}{1}_{\reg{C}}) \cdot A \ket{0}
    \eqqcolon \ket{\feyn_{0}} + \phi \ket{\feyn_{1}},
\end{align*}
where $\ket{\feyn_{b}}$ is the Feynman path corresponding to the part of the state that queries with the control bit set to~$b$.
Then we can compute the overall mixed state as
\begin{align*}
    \E_{\phi \sim \cyc_q} \bracks[\Big]{\proj{\circuit(\controlled (\phi U))}}
    &= \E_{\phi \sim \cyc_q} \Big[(\ket{\feyn_{0}} + \phi \ket{\feyn_{1}})(\bra{\feyn_{0}} + \phi^{-1} \bra{\feyn_{1}})\Big] \\
    &= \E_{\phi \sim \cyc_q} \Big[\proj{\feyn_0} + \phi^{-1} \ketbra{\feyn_0}{\feyn_1} + \phi \ketbra{\feyn_1}{\feyn_0} + |\phi|^2 \proj{\feyn_1}\Big]\\
    &= \proj{\feyn_0} + \proj{\feyn_1},
\end{align*}
where we used the fact that $\E_{\phi \sim \cyc_q} \phi = \E_{\phi \sim \cyc_q} \phi^{-1} = 0$ once $q \geq 2$.
Thus, the average output is just a mixture over these two Feynman paths, and the reason is that these two paths receive different powers of $\phi$ ($\phi^0$ and $\phi^1$) as their phases.
One way we could simulate this output is to simply measure the controlled bit and apply $U$ to the first wire only if it is 1,
as in the following circuit.
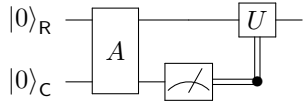
\begin{figure}[h]
\[ 
\Qcircuit @R=1em @C=1em {
\lstick{\ket{0}_{\reg{R}}}
    & \multigate{1}{A}
    & \qw
    & \gate{U}
    & \qw \\
\lstick{\ket{0}_{\reg{C}}}
    & \ghost{A}
    & \meter
    & \cctrl{-1}
}
\]
\caption{
    Simulating the one-query circuit with a mid-circuit measurement.
}
\label{fig:simple-sim}
\end{figure}

\noindent
There are two issues with this construction, however. 
First, one might hope for a circuit which computes a unitary and does not use mid-circuit measurements.
Second, and perhaps more importantly, this circuit does not generalize to the setting of two or more applications of $\controlled(\phi U)$.
To rectify these two issues, we will instead consider the following circuit.
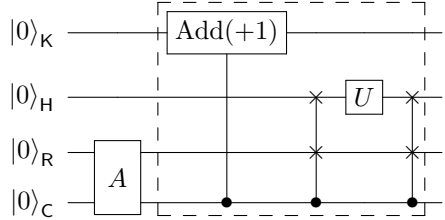
\begin{figure}[h]
\begin{equation*}
    \Qcircuit @R=1em @C=1em {
    \lstick{\ket{0}_{\reg{K}}}
    & \qw
    & \gate{\Add(+1)}
    & \qw
    & \qw
    & \qw
    & \qw \\
    \lstick{\ket{0}_{\reg{H}}}
    & \qw
    & \qw
    & \qswap
    & \gate{U}
    & \qswap
    & \qw \\
    \lstick{\ket{0}_{\reg{R}}}
    & \multigate{1}{A}
    & \qw
    & \qswap \qwx
    & \qw
    & \qswap \qwx
    & \qw \\
    \lstick{\ket{0}_{\reg{C}}}
    & \ghost{A}
    & \ctrl{-3}
    & \ctrl{-1}
    & \qw
    & \ctrl{-1}
    & \qw \gategroup{1}{3}{4}{6}{.7em}{--}
}
\end{equation*}
\caption{Simulating the one-query circuit without mid-circuit measurements.
The area in the dashed box is referred to as the \emph{$\controlled(U)$ gadget}.}
\end{figure}

\noindent
This circuit adds two new registers.
The first register $\reg{K}$ contains a counter which can be incremented.
This register is meant to count the power of $\phi$ along a given Feynman path,
and so it should be incremented when $\controlled(\phi U)$ is queried with the control bit set to 1.
The second register $\reg{H}$ is $d$-dimensional and is a copy of the $\reg{R}$ register. Intuitively, the purpose of this register is the following: when the control bit is set to 1, we apply $U$ to the $\reg{R}$ register. However, when the control bit is set to 0, we still need to apply $U$ to \emph{some} register, and so we apply it to the $\reg{H}$ register instead.
To check that this simulates our one-query circuit, let us first note that the state on the $\reg{R}$ and $\reg{C}$ registers prior to applying the $\controlled(U)$ gadget can be written as
\begin{equation*}
    A \ket{0} = (I_{\reg{R}} \otimes \proj{0}_{\reg{C}}) \cdot A \ket{0}  + (I_{\reg{R}} \otimes \proj{1}_{\reg{C}}) \cdot A \ket{0}.
\end{equation*}
The first term corresponds to the branch of the state with the control bit set to 0 prior to the $\controlled{U}$ query, and the second term corresponds to the branch of the state with the control bit set to 1. Then after the $\controlled(U)$ gadget is applied, the state becomes
\begin{align*}
    &{}(I_{\reg{R}} \otimes \proj{0}_{\reg{C}}) \cdot A \ket{0} \otimes (\ket{0}_{\reg{K}} \otimes U \ket{0}_{\reg{H}})
    + (U_{\reg{R}} \otimes \proj{1}_{\reg{C}}) \cdot A \ket{0} \otimes (\ket{1}_{\reg{K}} \otimes \ket{0}_{\reg{H}})\\
    = & \ket{\feyn_0} \otimes (\ket{0}_{\reg{K}} \otimes U \ket{0}_{\reg{H}}) + \ket{\feyn_1} \otimes (\ket{1}_{\reg{K}} \otimes \ket{0}_{\reg{H}}).
\end{align*}
Tracing out the $\reg{K}$ and $\reg{H}$ now gives us our desired mixture, because the counter $\reg{K}$ contains two different numbers along each path.

Generalizing this construction to multiple $\controlled(\phi U)$ queries is relatively straightforward: we simply replace each instance of $\controlled(\phi U)$ in the original circuit with a separate $\controlled(U)$ gadget in the decontrolled circuit.
Generalizing this further to circuits which have controlled access to $\phi U$, $(\phi U)^\dagger$, $(\phi U)^*$, and $(\phi U)^\trans$ requires a little more care, however.
First, as the counter register $\phi$ is meant to count the power of $\phi$ along a given Feynman path, we will increment the counter when simulating a query to $\phi U$ and $(\phi U)^\trans = \phi U^\trans$, and we will \emph{decrement} the counter when simulating a query to $(\phi U)^{\dagger} = \phi^{-1} U^\dagger$ and $(\phi U)^* = \phi^{-1} U^*$.
This ensures that Feynman paths which have different powers of $\phi$ in the original circuit decohere relative to each other in the decontrolled circuit.
In addition, we will replace the $d$-dimensional $\reg{H}$ register which was initialized to $\ket{0}_{\reg{H}}$ with \emph{two} $d$-dimensional registers $\reg{H}$ and $\reg{H}^\trans$ initialized to the $d$-dimensional EPR state $\ket{\Phi}_{\reg{H},\reg{H^{\trans}}}$.
To simulate a controlled $\phi U$ query when the controlled bit is set to 0, we still apply $U$ to the $\reg{H}$ register, as before.
But if, say, we want to simulate a controlled $(\phi U)^\trans$ query when the controlled bit is set to 0, we will instead apply $U^\trans$ to the $\reg{H}^\trans$ register, giving us
\begin{equation*}
    (I \otimes U^\trans) \cdot \ket{\Phi}_{\reg{H},\reg{H^{\trans}}}
    = (U \otimes I) \cdot \ket{\Phi}_{\reg{H},\reg{H^{\trans}}},
\end{equation*}
the same state as in the $\phi U$ case.
This ensures that the contents of the $\reg{K},\reg{H},\reg{H}^\trans$ registers along a given Feynman path will only depend on the power of $\phi$ that path picks up in the original circuit and not on any other property of that path,
and so two Feynman paths which have the same power of $\phi$ in the original circuit will remain in superposition with each other in the decontrolled circuit.
Together, these modifications suffice to simulate general controlled queries.

\subsection{Proof of \texorpdfstring{\cref{thm:main}}{Theorem 1.1}}

The main theorem is stated to allow for measurements within the circuit, but by the deferred measurement principle, without loss of generality we can consider circuits without interior measurements.
There will be no overhead in assuming this, as we can then recompile the resulting circuit back down to one which uses interior measurements.
We first prove that $\wh{\rho} = \E_{\phi \sim \cyc_q}[\rho(\phi U)]$ for finite $q$; the infinite $q$ case follows similarly.

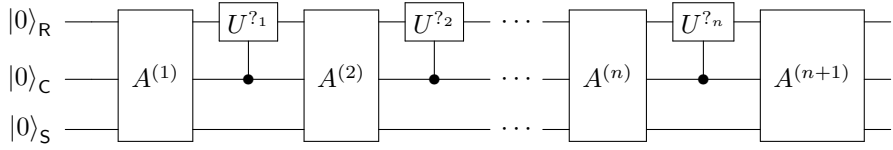
\begin{figure}[ht]
\[ 
\Qcircuit @R=1em @C=1em {
\lstick{\ket{0}_{\reg{R}}}
    & \qw
    & \multigate{2}{A^{(1)}}
    & \gate{U^{\tmpop_1}}
    & \multigate{2}{A^{(2)}}
    & \gate{U^{\tmpop_2}}
    & \qw
    & {\cdots}
    &
    & \multigate{2}{A^{(n)}}
    & \gate{U^{\tmpop_n}}
    & \multigate{2}{A^{(n+1)}}
    & \qw  \\
\lstick{\ket{0}_{\reg{C}}}
    & \qw
    & \ghost{A^{(1)}}
    & \ctrl{-1}
    & \ghost{A^{(2)}}
    & \ctrl{-1}
    & \qw
    & {\cdots}
    &
    & \ghost{A^{(n)}}
    & \ctrl{-1}
    & \ghost{A^{(n+1)}}
    & \qw  \\
\lstick{\ket{0}_{\reg{S}}}
    & \qw
    & \ghost{A^{(1)}}
    & \qw
    & \ghost{A^{(2)}}
    & \qw
    & \qw
    & {\cdots}
    &
    & \ghost{A^{(n)}}
    & \qw
    & \ghost{A^{(n+1)}}
    & \qw  
}
\]
\caption{
    A generic circuit which uses controlled queries to $U$, $U^\dagger$, $U^*$, and $U^\trans$.
    Following \eqref{eq:circuit-def}, the output of this circuit is denoted $\ket{\circuit(\controlled U)}_{\reg{CRS}}$.
}
\label{fig:original}
\end{figure}

An arbitrary quantum circuit which applies $V$, $V^\dagger$, $V^*$, and $V^\trans$ a total of $n$ times can be written in the following form:
\begin{align} \label{eq:circuit-def}
    \ket{\circuit(V)}_\reg{CRS} = A^{(n+1)}_{\reg{CRS}} V^{\tmpop_n}_\reg{CR} A^{(n)}_{\reg{CRS}} \cdots V^{\tmpop_2}_{\reg{CR}} A^{(2)}_{\reg{CRS}} V^{\tmpop_1}_{\reg{CR}} A^{(1)}_{\reg{CRS}} \ket{0}_{\reg{CRS}}.
\end{align}
Here, the $\tmpop$ indicates the type of operation: $V^{\tmpop}$ equals $V$, $V^\dagger$, $V^*$, and $V^\trans$ when $\tmpop = 1$, $\dagger$, $*$, and $\trans$, respectively.
To each type of operation, we associate a sign based on whether the operation inverts phases or not.
\begin{definition}[Sign assigned to matrix operations]
    We define the function $\sigma: \braces{1,\dag,*,\trans} \to \braces{\pm 1}$ such that, for all $\varphi \in \C$ of unit magnitude, $\varphi^{\tmpop} = \varphi^{\sigma(\tmpop)}$. 
    In particular, $\sigma(1) = \sigma(\trans) = +1$, and $\sigma(\dagger) = \sigma(*) = -1$.
\end{definition}
Fix a unitary $U \in \C^{d \times d}$.
Then our goal is to simulate
\begin{align*}
    \E_{\phi \sim \cyc_q} \bracks[\Big]{\proj{\circuit(\controlled (\phi U))}}.
\end{align*}
We begin by writing a particular output of a circuit in terms of Feynman paths.
\begin{definition}[Feynman path through control qubits] \label{def:feyn}
For the quantum circuit $\ket{\circuit(\controlled U)}$ defined in \eqref{eq:circuit-def} and for $0 \leq m \leq n$, we define
\begin{align*}
    \ket{\feyn_{b_1,\dots,b_m}} \coloneqq A^{(m+1)} \parens[\big]{\proj{b_m} \otimes U^{\tmpop_m b_m}} A^{(m)} \cdots \parens[\big]{\proj{b_2} \otimes U^{\tmpop_2 b_2}} A^{(2)} \parens[\big]{\proj{b_1} \otimes U^{\tmpop_1 b_1}} A^{(1)} \ket{0}.
\end{align*}
A circuit diagram for an example of such an expression is given in \cref{fig:feyn}.
As we will see in \cref{claim:feyn}, when $m = n$, this (sub-normalized) state corresponds to the Feynman paths which, on the $i$th controlled query, had the control qubit set to $\ket{b_i}$.
When $m < n$, the state is the same, but for the intermediate state of the circuit after performing $A^{(m+1)}$ and before applying $\controlled U^{\tmpop_{m+1}}$.
Further, we define
\begin{align*}
    \ket{\feyn(k)} \coloneqq \sum_{\substack{b_n,\dots,b_1 \in \braces{0,1} \\ \sigma(\tmpop_n) b_n + \dots + \sigma(\tmpop_1) b_1 = k}} \ket{\feyn_{b_1,\dots,b_n}}
\end{align*}
When the circuit applies only $\controlled U$, i.e.\ all $\tmpop_k = 1$, this is the sum over Feynman paths where $U$ is applied $k$ times.
In general, this is the sum over Feynman paths with a ``weight'' of $k$, where applying $U$ and $U^\trans$ count for weight $+1$ and $U^\dagger$ and $U^*$ count for $-1$.
\end{definition}

\begin{figure}[ht]
\[ 
\Qcircuit @R=1em @C=1em {
\lstick{\ket{0}_{\reg{R}}}
    & \qw
    & \multigate{2}{A^{(1)}}
    & \gate{U^{\tmpop_1}}
    & \multigate{2}{A^{(2)}}
    & \gate{U^{\tmpop_2}}
    & \multigate{2}{A^{(3)}}
    & \qw
    & \multigate{2}{A^{(4)}}
    & \qw  \\
\lstick{\ket{0}_{\reg{C}}}
    & \qw
    & \ghost{A^{(1)}}
    & \gate{\proj{1}}
    & \ghost{A^{(2)}}
    & \gate{\proj{1}}
    & \ghost{A^{(3)}}
    & \gate{\proj{0}}
    & \ghost{A^{(4)}}
    & \qw  \\
\lstick{\ket{0}_{\reg{S}}}
    & \qw
    & \ghost{A^{(1)}}
    & \qw
    & \ghost{A^{(2)}}
    & \qw
    & \ghost{A^{(3)}}
    & \qw
    & \ghost{A^{(4)}}
    & \qw  
}
\]
\caption{
    The circuit diagram for $\ket{\feyn_{1,1,0}}$, as defined in \cref{def:feyn}.
}
\label{fig:feyn}
\end{figure}

With these definitions, we can decompose the output of the circuit, when the black-box unitary is $\controlled (\varphi U)$.
\begin{claim}[Decomposing a single output into Feynman paths] \label{claim:feyn}
Let $\varphi \in \C$ have unit magnitude.
Then
\begin{align*}
    \ket{\circuit(\controlled (\varphi U))}
    = \sum_{b_n,\dots,b_1 \in \braces{0,1}} \varphi^{\tmpop_n b_n} \cdots \varphi^{\tmpop_1 b_1} \ket{\feyn_{b_1,\dots,b_n}}
    = \sum_{k = -\infty}^\infty \varphi^k \ket{\feyn(k)}
\end{align*}
\end{claim}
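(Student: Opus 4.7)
The claim is a bookkeeping identity, so my plan is direct expansion of the circuit. The key observation is that taking any of the four operations $\tmpop \in \braces{1,\dagger,*,\trans}$ commutes with the controlled-gate construction, because the projectors $\proj{0}$ and $\proj{1}$ are real and self-adjoint: $(\controlled(\varphi U))^\tmpop = \controlled((\varphi U)^\tmpop) = \proj{0} \otimes I + \varphi^{\sigma(\tmpop)} \proj{1} \otimes U^\tmpop$. I would then rewrite this as a sum over the control-qubit value $b \in \braces{0,1}$, namely $\sum_b \varphi^{\sigma(\tmpop) b} \proj{b} \otimes U^{\tmpop b}$, under the convention that $U^{\tmpop \cdot 0} = I$.

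Next I would substitute this two-term sum for each of the $n$ controlled queries in the circuit expression \eqref{eq:circuit-def} and apply multilinearity of operator composition. This produces a sum over tuples $(b_1,\ldots,b_n) \in \braces{0,1}^n$, with the phases $\prod_i \varphi^{\sigma(\tmpop_i) b_i}$ pulled out in front of each summand. What remains inside each summand is precisely the operator string that \cref{def:feyn} names $\ket{\feyn_{b_1,\ldots,b_n}}$. Translating the phase product into the paper's convention $\varphi^{\tmpop b} = \varphi^{\sigma(\tmpop) b}$ then yields the first equality of the claim.

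For the second equality, I would partition the $(b_1,\ldots,b_n)$ sum by the integer $k \coloneqq \sum_i \sigma(\tmpop_i) b_i$, which is the total power of $\varphi$ accumulated along that Feynman path. Pulling the common factor $\varphi^k$ out of the inner sum and recognizing the remaining expression as the definition of $\ket{\feyn(k)}$ gives the result. Although the outer sum in the claim ranges over all integers $k$, it is in fact supported on $-n \leq k \leq n$, since $\ket{\feyn(k)}$ is the zero vector elsewhere.

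The proof poses no real obstacle---the entire argument is a few lines of linear algebra. The only care needed is in handling the four symbol types $1, \dagger, *, \trans$ uniformly, and this is exactly what the auxiliary function $\sigma$ is designed to do; once it is in place the manipulation goes through without incident.
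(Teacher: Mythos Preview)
Your proposal is correct and follows essentially the same route as the paper's proof: expand each $(\controlled(\varphi U))^{\tmpop}$ as $\sum_{b} \proj{b} \otimes (\varphi U)^{\tmpop b}$, distribute over the $n$ queries, pull out the scalar phases to recognize $\ket{\feyn_{b_1,\dots,b_n}}$, and then regroup by the total exponent $k = \sum_i \sigma(\tmpop_i) b_i$. The only cosmetic difference is that the paper keeps the notation $\varphi^{\tmpop_i b_i}$ through the first equality and only invokes $\sigma$ at the regrouping step, whereas you pass to $\sigma$ immediately.
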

\begin{proof}
First, notice that
\begin{align*}
    (\controlled (\varphi U))^{\tmpop}_{\reg{CR}}
    = (\controlled ((\varphi U)^{\tmpop}))_{\reg{CR}}
    = \proj{0}_{\reg{C}} \otimes I_{\reg{R}} + \proj{1}_{\reg{C}} \otimes (\varphi U)^{\tmpop}_{\reg{R}}
    = \sum_{b \in \braces{0,1}} \proj{b} \otimes ((\varphi U)^{\tmpop b}).
\end{align*}
We can use this to expand our expression for the output state.
\begin{align*}
    & \ket{\circuit(\controlled (\varphi U))} \\
    &= A^{(n+1)} (\controlled (\varphi U))^{\tmpop_n} A^{(n)} \cdots (\controlled (\varphi U))^{\tmpop_2} A^{(2)} (\controlled (\varphi U))^{\tmpop_1} A^{(1)} \ket{0} \\
    &= \sum_{b_n,\dots,b_1 \in \braces{0,1}} A^{(n+1)} \parens[\big]{\proj{b_n} \otimes (\varphi U)^{\tmpop_n b_n}} A^{(n)} \cdots \parens[\big]{\proj{b_2} \otimes (\varphi U)^{\tmpop_2 b_2}} A^{(2)} \parens[\big]{\proj{b_1} \otimes (\varphi U)^{\tmpop_1 b_1}} A^{(1)} \ket{0} \\
    &= \sum_{b_n,\dots,b_1 \in \braces{0,1}} \varphi^{\tmpop_n b_n}\cdots \varphi^{\tmpop_1 b_1} A^{(n+1)} \parens[\big]{\proj{b_n} \otimes U^{\tmpop_n b_n}} A^{(n)} \cdots \parens[\big]{\proj{b_2} \otimes U^{\tmpop_2 b_2}} A^{(2)} \parens[\big]{\proj{b_1} \otimes U^{\tmpop_1 b_1} A^{(1)}} \ket{0} \\
    &= \sum_{b_n,\dots,b_1 \in \braces{0,1}} \varphi^{\tmpop_n b_n} \cdots \varphi^{\tmpop_1 b_1} \ket{\feyn_{b_1,\dots,b_n}}
\intertext{
    Finally, we regroup the sum based on the value of $\varphi^{\tmpop_n b_n} \cdots \varphi^{\tmpop_1 b_1} = \varphi^{\sigma(\tmpop_n) b_n + \dots + \sigma(\tmpop_1) b_1}$.
}
    &= \sum_{k=-\infty}^{\infty} \varphi^k \ket{\feyn(k)} \qedhere
\end{align*}
\end{proof}

At this point, we have separated out the dependence on $\varphi$ from the rest of the expression.
We now average over this random $\varphi$ in order to express the output in the following way.

\begin{claim}[Decomposing a mixture of output states into Feynman paths] \label{claim:feyn-mixture}
We can write the output state we wish to simulate in the following way:
\begin{align*}
    \E_{\phi \sim \cyc_q} \bracks[\Big]{\proj{\circuit(\controlled (\phi U))}}
    &= \sum_{k=0}^{q-1} \proj{\feyn(k \mod q)} \\
    & \qquad \text{ where } \ket{\feyn(k \mod q)} \coloneqq \sum_{p = -\infty}^{\infty} \ket{\feyn(k + pq)}.
\end{align*}
\end{claim}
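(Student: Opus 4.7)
The plan is to start from \cref{claim:feyn}, which already expresses $\ket{\circuit(\controlled(\phi U))} = \sum_{k \in \Z} \phi^k \ket{\feyn(k)}$ in a form where all of the $\phi$-dependence sits in scalar coefficients. Expanding the outer product then gives
\begin{align*}
    \proj{\circuit(\controlled(\phi U))} = \sum_{k,k' \in \Z} \phi^{k - k'} \, \ket{\feyn(k)}\bra{\feyn(k')},
\end{align*}
and since the $\ket{\feyn(k)}$'s do not depend on $\phi$, taking $\E_{\phi \sim \cyc_q}$ pushes through to a scalar expectation $\E_{\phi \sim \cyc_q}[\phi^{k-k'}]$ on each term.

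The next step is the standard roots-of-unity orthogonality fact: for an integer $m$,
\begin{align*}
    \E_{\phi \sim \cyc_q}[\phi^m] = \begin{cases} 1 & \text{if } m \equiv 0 \pmod q, \\ 0 & \text{otherwise.}\end{cases}
\end{align*}
This follows because $\cyc_q$ is a finite group and $\phi \mapsto \phi^m$ is a character which is trivial precisely when $q \mid m$. Applying this kills every $(k,k')$ term except those with $k \equiv k' \pmod q$, leaving
\begin{align*}
    \E_{\phi \sim \cyc_q} \bracks[\Big]{\proj{\circuit(\controlled(\phi U))}} = \sum_{\substack{k,k' \in \Z \\ k \equiv k' \pmod q}} \ket{\feyn(k)}\bra{\feyn(k')}.
\end{align*}

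Finally, I would regroup this double sum by residue class. Writing $k = r + pq$ and $k' = r + p'q$ for $r \in \{0,\dots,q-1\}$ and $p,p' \in \Z$, the sum factors as
\begin{align*}
    \sum_{r=0}^{q-1} \Bigl(\sum_{p \in \Z} \ket{\feyn(r + pq)}\Bigr)\Bigl(\sum_{p' \in \Z} \bra{\feyn(r + p'q)}\Bigr) = \sum_{r=0}^{q-1} \proj{\feyn(r \bmod q)},
\end{align*}
using exactly the definition $\ket{\feyn(r \bmod q)} = \sum_{p \in \Z} \ket{\feyn(r + pq)}$ given in the claim. This is the desired identity.

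There is no real obstacle here: the only thing to double-check is that the infinite sums in $p$ are actually finite, which is automatic because $\ket{\feyn_{b_1,\dots,b_n}}$ is only defined for $b_i \in \{0,1\}$, so $\ket{\feyn(k)} = 0$ unless $-n \leq k \leq n$, and thus only finitely many $p$ contribute for each residue $r$. The infinite-$q$ case ($\phi$ uniform on $\cyc_\infty$) then follows by the same computation, with $\E_{\phi \sim \cyc_\infty}[\phi^m] = \mathbf{1}[m = 0]$, so only the diagonal $k = k'$ survives and we recover $\sum_{k \in \Z} \proj{\feyn(k)}$; this matches the finite-$q$ answer for any $q > 2n$ since then distinct $k,k' \in [-n,n]$ are never congruent mod $q$.
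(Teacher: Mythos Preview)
Your proof is correct and follows essentially the same approach as the paper's: both start from \cref{claim:feyn}, expand the outer product, and apply the roots-of-unity orthogonality relation $\E_{\phi \sim \cyc_q}[\phi^m] = \mathbf{1}[q \mid m]$. The only cosmetic difference is ordering: the paper first collapses the infinite sum into residue classes (using $\varphi^q = 1$) and then takes the expectation over a finite double sum, whereas you take the expectation first and regroup by residue class afterward; the computations are otherwise identical.
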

\begin{proof}
Using \cref{claim:feyn}, we have that, for $\varphi \in \cyc_q$
\begin{align*}
    \ket{\circuit(\controlled (\varphi U))}
    = \sum_{k=-\infty}^{\infty} \varphi^k \ket{\feyn(k)}
    = \sum_{k=0}^{q-1} \varphi^k \ket{\feyn(k \mod q)}.
\end{align*}
The second equality holds because $\varphi^q = 1$, so the only possible choices are $\varphi^k$ for $0 \leq k \leq q-1$, and we can combine paths for which $\sigma(\tmpop_n)b_n + \dots + \sigma(\tmpop_1) b_1$ is the same mod $q$.
Hence,
\begin{align*}
    \E_{\phi \sim \cyc_q} \bracks[\Big]{\proj{\circuit(\controlled (\phi U))}}
    &= \E_{\phi \sim \cyc_q} \Bigg[\Bigg(\sum_{k=0}^{q-1} \phi^k \ket{\feyn(k \mod q)}\Bigg)\cdot \Bigg(\sum_{\ell=0}^{q-1} \phi^{-\ell} \bra{\feyn(\ell \mod q)}\Bigg)\Bigg]\\
    &= \sum_{k, \ell=0}^{q-1}\E_{\phi \sim \cyc_q}[\phi^{k-\ell}]\cdot \ketbra{\feyn(k \mod q)}{\feyn(\ell \mod q)} \\
    &= \sum_{k=0}^{q-1} \proj{\feyn(k \mod q)},
\end{align*}
where the last step uses the fact that $\E_{\phi \sim \cyc_q}[\phi^{k-\ell}]$ equals 1 if $k = \ell$ and 0 otherwise. This completes the proof.
\end{proof}

\begin{figure}[ht]
\[ 
\hspace{3em}
\Qcircuit @R=1em @C=1em {
\lstick{\ket{0}_{\reg{K}}}
    & \qw
    & \qw
    & \multigate{4}{\unctrl(U, \tmpop_1)}
    & \qw
    & \multigate{4}{\unctrl(U, \tmpop_2)}
    & \qw
    & {\cdots}
    &
    & \qw
    & \multigate{4}{\unctrl(U, \tmpop_n)}
    & \qw
    & \qw  \\
\lstick{}
    & \qw
    & \qw
    & \ghost{\unctrl(U, \tmpop_1)}
    & \qw
    & \ghost{\unctrl(U, \tmpop_2)}
    & \qw
    & {\cdots}
    &
    & \qw
    & \ghost{\unctrl(U, \tmpop_n)}
    & \qw
    & \qw  \\
\lstick{}
    & \qw
    & \qw
    & \ghost{\unctrl(U, \tmpop_1)}
    & \qw
    & \ghost{\unctrl(U, \tmpop_2)}
    & \qw
    & {\cdots}
    &
    & \qw
    & \ghost{\unctrl(U, \tmpop_n)}
    & \qw
    & \qw
    \inputgroupv{2}{3}{1em}{1em}{\ket{\Phi}_{\reg{H},\reg{H^{\trans}}} \hspace{2em}} \\
\lstick{\ket{0}_{\reg{R}}}
    & \qw
    & \multigate{2}{A^{(1)}}
    & \ghost{\unctrl(U, \tmpop_1)}
    & \multigate{2}{A^{(2)}}
    & \ghost{\unctrl(U, \tmpop_2)}
    & \qw
    & {\cdots}
    &
    & \multigate{2}{A^{(n)}}
    & \ghost{\unctrl(U, \tmpop_n)}
    & \multigate{2}{A^{(n+1)}}
    & \qw  \\
\lstick{\ket{0}_{\reg{C}}}
    & \qw
    & \ghost{A^{(1)}}
    & \ghost{\unctrl(U, \tmpop_1)}
    & \ghost{A^{(2)}}
    & \ghost{\unctrl(U, \tmpop_2)}
    & \qw
    & {\cdots}
    &
    & \ghost{A^{(n)}}
    & \ghost{\unctrl(U, \tmpop_n)}
    & \ghost{A^{(n+1)}}
    & \qw  \\
\lstick{\ket{0}_{\reg{S}}}
    & \qw
    & \ghost{A^{(1)}}
    & \qw
    & \ghost{A^{(2)}}
    & \qw
    & \qw
    & {\cdots}
    &
    & \ghost{A^{(n)}}
    & \qw
    & \ghost{A^{(n+1)}}
    & \qw  \\
}
\]
\caption{
    The circuit diagram for the decontrolled circuit $\ket{\unctrl(\circuit)}$.
    Two ancilla registers are added: register $\reg{K}$ is dimension $n$ and initialized to $\ket{0}$, and registers $\reg{H}$ and $\reg{H^\trans}$ are both dimension $d$, and are together initialized to a maximally entangled state.
    \cref{fig:original} has the diagram of the original circuit $\ket{\circuit}$, and \cref{fig:conversion} contains the diagrams for $\unctrl(U, \tmpop)$.
}
\label{fig:simulation}
\end{figure}

\begin{figure}[ht]
\begin{align*}
\Qcircuit @R=1em @C=1em {
& & & & \lstick{\reg{K}}
    & \gate{\Add(+1)}
    & \qw
    & \qw
    & \qw
    & \qw \\
& & & & \lstick{\reg{H}}
    & \qw
    & \qswap
    & \gate{U}
    & \qswap
    & \qw \\
& \gate{U} & \qw & \push{\rule{.3em}{0em}\mapsto\rule{2em}{0em}} &
    & \qw
    & \qswap \qwx
    & \qw
    & \qswap \qwx
    & \qw \\
& \ctrl{-1} & \qw & &
    & \ctrl{-3}
    & \ctrl{-1}
    & \qw
    & \ctrl{-1}
    & \qw
}
&&
\Qcircuit @R=1em @C=1em {
& & & & \lstick{\reg{K}}
    & \gate{\Add(-1)}
    & \qw
    & \qw
    & \qw
    & \qw \\
& & & & \lstick{\reg{H}}
    & \qw
    & \qswap
    & \gate{U^\dagger}
    & \qswap
    & \qw \\
& \gate{U^\dagger} & \qw & \push{\rule{.3em}{0em}\mapsto\rule{2em}{0em}} &
    & \qw
    & \qswap \qwx
    & \qw
    & \qswap \qwx
    & \qw \\
& \ctrl{-1} & \qw & &
    & \ctrl{-3}
    & \ctrl{-1}
    & \qw
    & \ctrl{-1}
    & \qw
} \\[2em]
\Qcircuit @R=1em @C=1em {
& & & & \lstick{\reg{K}}
    & \gate{\Add(-1)}
    & \qw
    & \qw
    & \qw
    & \qw \\
& & & & \lstick{\reg{H^{\trans}}}
    & \qw
    & \qswap
    & \gate{U^*}
    & \qswap
    & \qw \\
& \gate{U^*} & \qw & \push{\rule{.3em}{0em}\mapsto\rule{2em}{0em}} &
    & \qw
    & \qswap \qwx
    & \qw
    & \qswap \qwx
    & \qw \\
& \ctrl{-1} & \qw & &
    & \ctrl{-3}
    & \ctrl{-1}
    & \qw
    & \ctrl{-1}
    & \qw \\
}
&&
\Qcircuit @R=1em @C=1em {
& & & & \lstick{\reg{K}}
    & \gate{\Add(+1)}
    & \qw
    & \qw
    & \qw
    & \qw \\
& & & & \lstick{\reg{H^{\trans}}}
    & \qw
    & \qswap
    & \gate{U^\trans}
    & \qswap
    & \qw \\
& \gate{U^\trans} & \qw & \push{\rule{.3em}{0em}\mapsto\rule{2em}{0em}} &
    & \qw
    & \qswap \qwx
    & \qw
    & \qswap \qwx
    & \qw \\
& \ctrl{-1} & \qw & &
    & \ctrl{-3}
    & \ctrl{-1}
    & \qw
    & \ctrl{-1}
    & \qw \\
}
\end{align*}
\caption{
    Shown are the outputs of $\unctrl$ applied to controlled queries.
    We ``decontrol'' a general circuit by replacing the controlled queries by their corresponding gadgets (\cref{fig:simulation}).
    All of the gadgets consist of the uncontrolled version of the query, along with two controlled-SWAPs and one controlled add---either an increment or decrement.
    Note that the gadgets for $\controlled U$ and $\controlled U^\dagger$ use register $\reg{H}$, while the gadgets for $\controlled U^*$ and $\controlled U^\trans$ use register $\reg{H}^\trans$.
    When all of these registers are represented with qubits, these additional gates can be implemented using $\bigO{\log(d) + \log(\abs{\reg{K}})}$ two-qubit gates, where $\abs{\reg{K}}$ is the dimension of register $\reg{K}$.
}
\label{fig:conversion}
\end{figure}
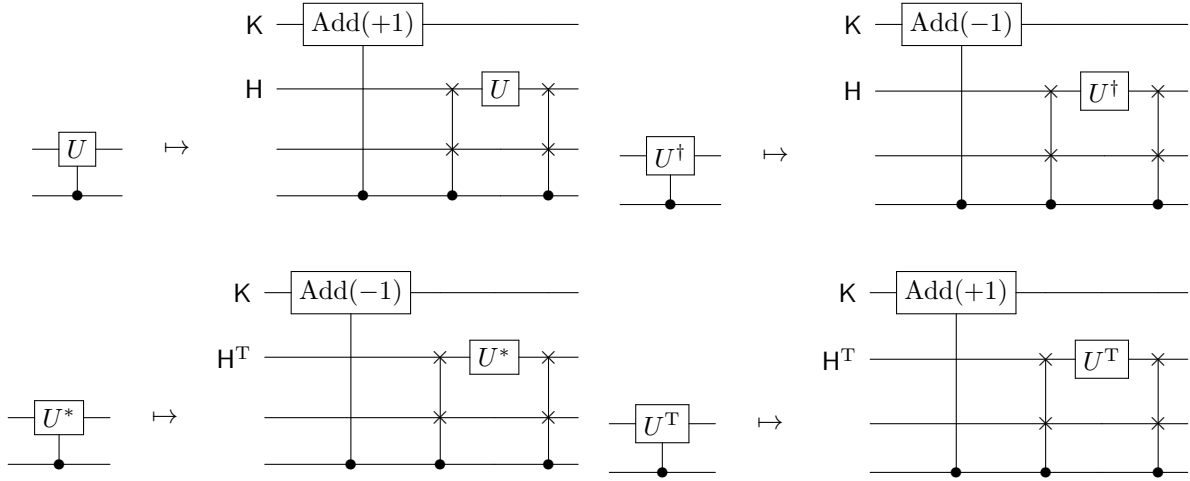

Now, we design a simulation algorithm which produces $\sum_{k=0}^{q-1} \proj{\feyn(k \mod q)}$ as output.
Our strategy is to introduce three registers: a counter register $\reg{K}$ and two hold registers, $\reg{H}$ and $\reg{H^{\trans}}$.
At the start of the circuit, $\reg{K}$ is initialized to $\ket{0}$ and the hold registers are initialized to the maximally entangled state $\ket{\Phi}_{\reg{H},\reg{H^{\trans}}} \coloneqq \frac{1}{\sqrt{d}}\sum_{i=1}^d \ket{i}_{\reg{H}}\ket{i}_{\reg{H^{\trans}}}$.
Then, we replace every $\controlled U^{\tmpop}$ with $\unctrl(U, \tmpop)$, where
\begin{align*}
    \unctrl(U, 1) &= \proj{0}_{\reg{C}} \otimes I_{\reg{R}} \otimes U_{\reg{H}} + \proj{1}_{\reg{C}} \otimes U_{\reg{R}} \otimes \Add(+1)_{\reg{K}} \\
    \unctrl(U, \dagger) &= \proj{0}_{\reg{C}} \otimes I_{\reg{R}} \otimes U^\dagger_{\reg{H}} + \proj{1}_{\reg{C}} \otimes U^\dagger_{\reg{R}} \otimes \Add(-1)_{\reg{K}} \\
    \unctrl(U, *) &= \proj{0}_{\reg{C}} \otimes I_{\reg{R}} \otimes U^*_{\reg{H^{\trans}}} + \proj{1}_{\reg{C}} \otimes U^*_{\reg{R}} \otimes \Add(-1)_{\reg{K}} \\
    \unctrl(U, \trans) &= \proj{0}_{\reg{C}} \otimes I_{\reg{R}} \otimes U^\trans_{\reg{H^{\trans}}} + \proj{1}_{\reg{C}} \otimes U^\trans_{\reg{R}} \otimes \Add(+1)_{\reg{K}}
\end{align*}
Here, $\Add(j)\ket{k} = \ket{k+j}$ is an adder.
Alternatively, we can write the expression in the following way.
\begin{align} \label{eq:unctrl-cases}
    \unctrl(U, \tmpop) = \begin{cases}
        \proj{0}_{\reg{C}} \otimes I_{\reg{R}} \otimes U^{\sigma(\tmpop)}_{\reg{H}} + \proj{1}_{\reg{C}} \otimes U_{\reg{R}}^? \otimes \Add(\sigma(\tmpop))_{\reg{K}}
        & \tmpop \in \braces{1, \dagger} \\
        \proj{0}_{\reg{C}} \otimes I_{\reg{R}} \otimes (U^{\sigma(\tmpop)})^{\trans}_{\reg{H^{\trans}}} + \proj{1}_{\reg{C}} \otimes U_{\reg{R}}^?\otimes \Add(\sigma(\tmpop))_{\reg{K}}
        & \tmpop \in \braces{*, \trans}
    \end{cases}
\end{align}

\cref{fig:conversion} demonstrates how these can be implemented with one query to an uncontrolled unitary.
\cref{fig:simulation} demonstrates the shape of the final simulation circuit, which we denote $\unctrl(\circuit)$:
\begin{align*}
    \ket{\unctrl(\circuit)}
    \coloneqq A^{(n+1)}_{\reg{CRS}} \unctrl(U, \tmpop_n)_{\reg{KHH^{\trans}CR}} A^{(n)}_{\reg{CRS}} \cdots \unctrl(U, \tmpop_2)_{\reg{KHH^{\trans}CR}} A^{(2)}_{\reg{CRS}} \unctrl(U, \tmpop_1)_{\reg{KHH^{\trans}CR}} A^{(1)}_{\reg{CRS}} \ket{0}_{\reg{K}}\ket{\Phi}_{\reg{HH^{\trans}}}\ket{0}_{\reg{CRS}}
\end{align*}
Now, we verify that the output of the simulation circuit equals our desired state.
We will need a basic fact about Choi states.
\begin{fact}[Ricochet property]
    For a matrix $X \in \C^{d \times d}$, let $\ket{\Phi(X)} = \frac{1}{\sqrt{d}}\sum_{i = 1}^d \sum_{j=1}^d X_{i,j}\ket{i}\ket{j}$ be its Choi state.
    Then
    \begin{align*}
        (X \otimes I)\ket{\Phi} = (I \otimes X^\trans)\ket{\Phi} = \ket{\Phi(X)}.
    \end{align*}
\end{fact}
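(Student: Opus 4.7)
The Ricochet property is a standard identity about maximally entangled states, and I would prove it by direct basis-level computation; the only content is carefully matching up matrix indices on both sides against the definition $\ket{\Phi(X)} = \frac{1}{\sqrt{d}}\sum_{i,j} X_{i,j}\ket{i}\ket{j}$. My plan is to verify each of the two equalities in turn, working in the computational basis and using linearity of the tensor product.

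First, for $(X \otimes I)\ket{\Phi} = \ket{\Phi(X)}$, I would expand using linearity to get $\frac{1}{\sqrt{d}}\sum_i (X\ket{i}) \otimes \ket{i}$, substitute $X\ket{i} = \sum_j X_{j,i}\ket{j}$ to obtain $\frac{1}{\sqrt{d}}\sum_{i,j} X_{j,i}\ket{j}\ket{i}$, and then relabel the summation indices so that the first-register index is $i$ and the second-register index is $j$, matching the definition of $\ket{\Phi(X)}$. For the second equality $(I \otimes X^\trans)\ket{\Phi} = \ket{\Phi(X)}$ the calculation is analogous but cleaner: expand to $\frac{1}{\sqrt{d}}\sum_i \ket{i} \otimes X^\trans\ket{i}$ and use $X^\trans\ket{i} = \sum_j X_{i,j}\ket{j}$, which yields $\ket{\Phi(X)}$ directly without any relabeling.

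I do not anticipate a real obstacle; the only care needed is index bookkeeping. Conceptually, the essence of the ``ricochet'' is that the transpose on the second register exactly compensates for the swap of row-index and column-index that occurs when the matrix acts on the first register, so both sides produce the same coefficient $X_{i,j}$ on the basis ket $\ket{i}\ket{j}$.
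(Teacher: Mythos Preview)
Your proposal is correct. The paper states this as a \emph{fact} without proof, so there is no argument to compare against; your direct basis computation is the standard verification and is exactly what one would supply if asked to justify it.
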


\begin{claim}[Decomposing the simulated circuit into Feynman paths] \label{claim:sim}
The output of the simulation circuit can be written as follows:
\begin{align*}
    \ket{\unctrl(\circuit)}
    &= \sum_{k=-\infty}^{\infty} \ket{\feyn(k)}_{\reg{CRS}} \ket{k}_{\reg{K}} \ket{\Phi(U^{\sigma(\tmpop_n) + \dots + \sigma(\tmpop_1) - k})}_{\reg{HH^{\trans}}}.
\end{align*}
Here, $\ket{k}$ implicitly represents $\ket{k \mod \,\abs{\reg{K}}}$.
\end{claim}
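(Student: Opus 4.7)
The plan is to prove the claim by induction on the number of gadgets $m$ applied, tracking a more detailed intermediate formula. For $0 \le m \le n$, let $\ket{\unctrl(\circuit)_m}$ denote the state after applying $A^{(1)}, \unctrl(U, \tmpop_1), A^{(2)}, \dots, A^{(m+1)}$ to the initial state, and set $k(b) \coloneqq \sum_{j=1}^m \sigma(\tmpop_j) b_j$ and $s_m \coloneqq \sum_{j=1}^m \sigma(\tmpop_j)$. I will prove by induction on $m$ that
\[
    \ket{\unctrl(\circuit)_m} = \sum_{b_1,\dots,b_m \in \{0,1\}} \ket{\feyn_{b_1,\dots,b_m}}_{\reg{CRS}} \ket{k(b)}_{\reg{K}} \ket{\Phi(U^{s_m - k(b)})}_{\reg{HH^\trans}}.
\]
Setting $m = n$ and regrouping the sum by the value of $k(b)$ (noting that $\ket{\feyn(k)} = 0$ whenever $|k| > n$, so the sum over all integers is effectively finite) then yields the claimed formula via the definition of $\ket{\feyn(k)}$ in \cref{def:feyn}.

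The base case $m = 0$ is immediate: the state is $A^{(1)}_{\reg{CRS}}\ket{0}_{\reg{CRS}} \otimes \ket{0}_{\reg{K}} \otimes \ket{\Phi(I)}_{\reg{HH^\trans}}$, which equals the right-hand side with the empty bit string, since $k(b) = s_0 = 0$ and the $m=0$ case of \cref{def:feyn} gives the Feynman-path factor as exactly $A^{(1)}\ket{0}$.

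For the inductive step, I apply $\unctrl(U, \tmpop_{m+1})$ and then $A^{(m+2)}$ to the inductive expression, using the case split in \eqref{eq:unctrl-cases}. In the $\proj{0}_\reg{C}$ branch, the identity acts on $\reg{R}$, the counter $\reg{K}$ is unchanged, and either $U^{\sigma(\tmpop_{m+1})}$ is applied to $\reg{H}$ or $(U^{\sigma(\tmpop_{m+1})})^\trans$ is applied to $\reg{H^\trans}$; by the ricochet property, both options turn $\ket{\Phi(U^{s_m - k(b)})}$ into $\ket{\Phi(U^{s_m + \sigma(\tmpop_{m+1}) - k(b)})}$, which is the Choi-state exponent $s_{m+1} - k(b)$ appropriate to extending the bit string by $b_{m+1} = 0$. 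In the $\proj{1}_\reg{C}$ branch, $U^{\tmpop_{m+1}}$ acts on $\reg{R}$, $\reg{K}$ is incremented by $\sigma(\tmpop_{m+1})$ so the new counter value is $k(b) + \sigma(\tmpop_{m+1})$, and the hold registers are untouched; the exponent on the (unchanged) Choi state still equals $s_m - k(b) = s_{m+1} - (k(b) + \sigma(\tmpop_{m+1}))$, again as required. Applying $A^{(m+2)}$ then turns each $(\proj{b_{m+1}} \otimes U^{\tmpop_{m+1} b_{m+1}}) \ket{\feyn_{b_1,\dots,b_m}}$ factor into $\ket{\feyn_{b_1,\dots,b_{m+1}}}$ via the recursive form of \cref{def:feyn}, completing the inductive step.

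The main obstacle is the ricochet calculation in the $b_{m+1}=0$ branch: the definition of $\unctrl$ splits based on whether $\tmpop \in \{1, \dagger\}$ or $\tmpop \in \{*, \trans\}$, acting on $\reg{H}$ or $\reg{H^\trans}$ respectively, and the registers are chosen precisely so that the ricochet identity $(X \otimes I)\ket{\Phi} = (I \otimes X^\trans)\ket{\Phi}$ yields the same update to the Choi state regardless of the operation type. Once this is in hand, everything else reduces to elementary bookkeeping of $k(b)$ and $s_m$ against \cref{def:feyn}.
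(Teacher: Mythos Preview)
Your proposal is correct and follows essentially the same approach as the paper's own proof: both argue by induction on the number $m$ of gadgets applied, maintain the same intermediate formula for $\ket{\unctrl(\circuit)_m}$ (you index the sum over bit strings $b$ with $k(b)$ and $s_m$, while the paper groups by $k$ first, but these are identical), and handle the $\{1,\dagger\}$ versus $\{*,\trans\}$ split via the ricochet property on the Choi state in exactly the same way.
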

\begin{proof}
We prove this by induction on $n$.
For this proof, we denote $\sigma_m \coloneqq \sigma(\tmpop_m)$ for brevity.
Consider the intermediate state of the circuit $\ket{\unctrl(\circuit)_m}$, after performing $A^{(m+1)}$ and before applying $\unctrl(U, \tmpop_{m+1})$.
We will prove that
\begin{align*}
    \ket{\unctrl(\circuit)_m}
    &= \sum_{k=-\infty}^{\infty} \sum_{\substack{b_m,\dots,b_1 \in \braces{0,1} \\ \sigma_m b_m + \dots + \sigma_1 b_1 = k}} \ket{\feyn_{b_1,\dots,b_m}}_{\reg{CRS}} \ket{k}_{\reg{K}} \ket{\Phi(U^{\sigma_m + \dots + \sigma_1 - k})}_{\reg{HH^{\trans}}}.
\end{align*}
This implies the claim by taking $m = n$, since the states in the $\reg{KHH^\trans}$ registers do not depend on $b_n,\dots,b_1$.

For the base case, $m = 0$, the output state is
\begin{align*}
    \ket{\unctrl(\circuit)_0} = A^{(1)}_{\reg{CRS}} \ket{0}_{\reg{CRS}} \ket{0}_{\reg{K}} \ket{\Phi}_{\reg{HH^{\trans}}} = \ket{\feyn_{\varnothing}}_{\reg{CRS}} \ket{0}_{\reg{K}} \ket{\Phi}_{\reg{HH^{\trans}}}
\end{align*}
which matches as desired.

 For the inductive step, we can write
\begin{align*}
    \ket{\unctrl(\circuit)_{m+1}} &= A^{(m+2)} \unctrl(U, \tmpop_{m+1}) \ket{\unctrl(\circuit)_{m}} \\
    &= \sum_{k=-\infty}^{\infty} \sum_{\substack{b_m,\dots,b_1 \in \braces{0,1} \\ \sigma_m b_m + \dots + \sigma_1 b_1 = k}} A^{(m+2)}_{\reg{CRS}} \unctrl(U, \tmpop_{m+1})_{\reg{KHH^{\trans}CR}} \ket{\feyn_{b_1,\dots,b_m}}_{\reg{CRS}} \ket{k}_{\reg{K}} \ket{\Phi(U^{\sigma_m + \dots + \sigma_1 - k})}_{\reg{HH^{\trans}}}.
\end{align*}
We now consider the summand expression, first looking at the case where $\tmpop_{m+1}$ is either $1$ or $\dagger$.
Then, we can use \eqref{eq:unctrl-cases} to conclude that
\begin{align*}
    & A^{(m+2)}_{\reg{CRS}} \unctrl(U, \tmpop_{m+1})_{\reg{KHH^{\trans}CR}} \ket{\feyn_{b_1,\dots,b_m}}_{\reg{CRS}} \ket{k}_{\reg{K}} \ket{\Phi(U^\ell)}_{\reg{HH^{\trans}}} \\
    &= A^{(m+2)}_{\reg{CRS}}\parens[\Big]{\proj{0}_{\reg{C}} \otimes I_{\reg{R}} \otimes U^{\sigma_{m+1}}_{\reg{H}} + \proj{1}_{\reg{C}} \otimes U_{\reg{R}}^? \otimes \Add(\sigma_{m+1})_{\reg{K}}} \ket{\feyn_{b_1,\dots,b_m}}_{\reg{CRS}} \ket{k}_{\reg{K}} \ket{\Phi(U^\ell)}_{\reg{HH^{\trans}}} \\
    &= \ket{\feyn_{b_1,\dots,b_m,0}}_{\reg{CRS}} \ket{k}_{\reg{K}} U_{\reg{H}}^{\sigma_{m+1}}\ket{\Phi(U^\ell)}_{\reg{HH^{\trans}}} + \ket{\feyn_{b_1,\dots,b_m,1}}_{\reg{CRS}} \parens[\big]{\Add(\sigma_{m+1})\ket{k}}_{\reg{K}} \ket{\Phi(U^\ell)}_{\reg{HH^{\trans}}} \\
    &= \sum_{b_{m+1} \in \braces{0,1}}\ket{\feyn_{b_1,\dots,b_m,b_{m+1}}}_{\reg{CRS}} \ket{k + \sigma_{m+1}b_{m+1}}_{\reg{R}} \ket{\Phi(U^{\ell + \sigma_{m+1}(1-b_{m+1})})}_{\reg{HH^{\trans}}}
\end{align*}
When $\tmpop_{m+1}$ is $*$ or $\trans$, the only difference is that $U^{\sigma_{m+1}}_{\reg{H}}$ is replaced with $(U^{\sigma_{m+1}})^{\trans}_{\reg{H^{\trans}}}$; but these two operations have the same effect on Choi states, so the equality derived above still holds.
We return to the full sum, and substitute the expression we just derived.
\begin{align*}
    &\ket{\unctrl(\circuit)_{m+1}} \\
    &= A^{(m+2)} \unctrl(U, \tmpop_{m+1}) \ket{\unctrl(\circuit)_{m}} \\
    &= \sum_{k=-\infty}^{\infty} \sum_{\substack{b_{m+1}, b_m,\dots,b_1 \in \braces{0,1} \\ \sigma_m b_m + \dots + \sigma_1 b_1 = k}} \ket{\feyn_{b_1,\dots,b_m,b_{m+1}}}_{\reg{CRS}} \ket{k + \sigma_{m+1}b_{m+1}}_{\reg{R}} \ket{\Phi(U^{\sigma_m + \dots + \sigma_1 - k + \sigma_{m+1}(1-b_{m+1})})}_{\reg{HH^{\trans}}} \\
    &= \sum_{k=-\infty}^{\infty} \sum_{\substack{b_{m+1}, b_m,\dots,b_1 \in \braces{0,1} \\ \sigma_{m+1} b_{m+1} + \sigma_m b_m + \dots + \sigma_1 b_1 = k}} \ket{\feyn_{b_1,\dots,b_m,b_{m+1}}}_{\reg{CRS}} \ket{k}_{\reg{R}} \ket{\Phi(U^{\sigma_{m+1} + \sigma_m + \dots + \sigma_1 - k})}_{\reg{HH^{\trans}}}
\end{align*}
This proves the inductive hypothesis.
\end{proof}

Using this claim, we can prove the theorem.
First note that, for $b_n,\dots,b_1 \in \braces{0,1}$, $\sigma(\tmpop_n)b_n + \dots + \sigma(\tmpop_1) b_1$ can only take $n+1$ possible values in an interval.
So, if we take $q \geq n+1$, then $\sigma(\tmpop_n)b_n + \dots + \sigma(\tmpop_1) b_1$ equals $\sigma(\tmpop_n)b_n' + \dots + \sigma(\tmpop_1) b_1' $ if and only if they are equal mod $q$.
In other words,
\begin{align*}
    \E_{\phi \sim \cyc_q} \bracks[\Big]{\proj{\circuit(\controlled (\phi U))}}
    = \sum_{k=0}^{q-1} \proj{\feyn(k \mod q)}
    &= \sum_{k = -\infty}^\infty \proj{\feyn(k)}.
\end{align*}
For the same reason, if $d_{\reg{K}} \geq n+1$, then the vectors $\ket{k}_{\reg{K}}\ket{\Phi(U^{\sigma(\tmpop_n) + \dots + \sigma(\tmpop_1) - k})}_{\reg{HH^{\trans}}}$ are orthogonal for all $k$ which appear in the sum of \cref{claim:sim}.
So,
\begin{align*}
    \tr_{\reg{KHH^{\trans}}}(\proj{\unctrl(\circuit)}_{\reg{KHH^{\trans}CRS}})
    &= \sum_{k = -\infty}^\infty \proj{\feyn(k)}_{\reg{CRS}}.
\end{align*}
The simulation succeeds as desired.

The simulated circuit uses three additional registers, of size $n+1$, $d$, and $d$, respectively.
Each controlled query is replaced with its uncontrolled version, where the gadget uses two controlled SWAPs and one call to an adder modulo $n+1$; these cost $\bigO{\log(n) + \log(d)}$ two-qubit gates per query.

\subsection{Extensions}

\begin{remark}[Handling circuits which query both controlled and uncontrolled versions]
    In our argument, the $A^{(m)}$'s are allowed to contain uncontrolled queries, i.e.\ $U$, $U^\dagger$, $U^*$, and $U^\trans$.
    These queries remain untouched, and the decontrolled circuit still outputs $\E_{\phi \sim \cyc_q}[\rho(\phi U)]$, since for the uncontrolled queries the $\phi$ correspond to a global phase, and so can be added without changing the output.
\end{remark}

\begin{proposition}[Versions of the simulation algorithm with lower space overhead] \label{prop:less-overhead}
    If one wishes to lower the gate and space overhead in \cref{thm:main}, the following modifications can be performed to the simulation algorithm:
    \begin{enumerate}
        \item If the $\reg{K}$ register is removed (along with the gates that interact with it), then the simulation algorithm outputs the mixed state $\frac{1}{d} \sum_{i=1}^d \proj{\circuit(\controlled (\lambda_i^{-1}U))}$, where $\braces{\lambda_i}$ are the eigenvalues of $U$.
        The overhead is then reduced to $\bigO{n \log(d)}$ additional gates and $2\log_2(d)$ additional qubits;
        \item If $U^p = I$, then if we run the simulation algorithm with a counter with $\abs{\reg{K}} = p$, then it outputs $\E_{\phi \sim \cyc_p}[\rho(\phi U)]$, even when $p \leq n$.
        The overhead is then $\bigO{n(\log(p) + \log(d))}$ gates and $\ceil{\log_2(p)} + 2 \log_2(d)$ qubits.
        \item If $\circuit$ only uses controlled $U$ and $U^\dagger$, then the $\reg{H^{\trans}}$ register can be removed from the simulated circuit;
        \item If $\circuit$ only uses controlled $U^*$ and $U^\trans$, then the $\reg{H}$ register can be removed from the simulated circuit.
    \end{enumerate}
\end{proposition}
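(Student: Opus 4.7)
All four modifications can be verified by re-running the inductive computation behind \cref{claim:sim} in each modified setting and then tracing out the (possibly smaller) set of ancilla registers. The only new ingredient beyond the main proof is the Choi inner product identity $\braket{\Phi(X)}{\Phi(Y)} = \tr(X^\dagger Y)/d$, which governs the coefficients of the cross-terms produced by the trace; the stated overhead bounds are then immediate from the modified circuit constructions.

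For part~(1), removing the $\reg{K}$ register simply strips the $\ket{k}_{\reg{K}}$ factor from the conclusion of \cref{claim:sim}, leaving the pre-trace state $\sum_k \ket{\feyn(k)}_{\reg{CRS}}\ket{\Phi(U^{\Sigma-k})}_{\reg{HH^{\trans}}}$, where $\Sigma = \sigma(\tmpop_1) + \dots + \sigma(\tmpop_n)$. Tracing out $\reg{HH^{\trans}}$ produces the cross-term coefficient $\tfrac{1}{d}\tr(U^{k'-k}) = \tfrac{1}{d}\sum_i \lambda_i^{k'-k}$. Exchanging the eigenvalue sum with the Feynman-path sum and applying \cref{claim:feyn} with $\varphi = \lambda_i^{-1}$ collapses the result to $\tfrac{1}{d}\sum_i \proj{\circuit(\controlled(\lambda_i^{-1}U))}$, which is the claimed mixture.

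For part~(2), we apply \cref{claim:sim} verbatim, except that $\ket{k}_{\reg{K}}$ is read modulo $p$. The trace over $\reg{KHH^{\trans}}$ gives the cross-term coefficient $\braket{k' \bmod p}{k \bmod p}\cdot \tfrac{1}{d}\tr(U^{k'-k})$, which under the assumption $U^p = I$ equals $1$ precisely when $k \equiv k' \pmod{p}$; the Fourier identity used in \cref{claim:feyn-mixture} (applied with modulus $p$ in place of $q$) then matches this with the claimed expectation $\E_{\phi}[\rho(\phi U)]$. Parts~(3) and~(4) are even simpler: if the circuit uses only $\controlled U$ and $\controlled U^\dagger$ queries, no gadget ever touches $\reg{H^{\trans}}$, so we can drop that register and initialize $\reg{H}$ to $\ket{0}$. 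Because the operators applied to $\reg{H}$ are all integer powers of $U$, which commute, the induction gives $\sum_k \ket{\feyn(k)}_{\reg{CRS}}\ket{k}_{\reg{K}}U^{\Sigma-k}\ket{0}_{\reg{H}}$, and with $\abs{\reg{K}} \geq n+1$ the counter already orthogonalizes distinct Feynman weights, so the trace cleanly recovers $\sum_k \proj{\feyn(k)}$. Part~(4) is the transpose-symmetric version, using that $(U^*)^{-1} = U^{\trans}$ so that the accumulated operator on $\reg{H^{\trans}}$ also remains a single power.

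The main subtlety is part~(1): removing the counter eliminates the orthogonalization device used in the main theorem, and the argument only goes through because the Choi ricochet identity consolidates all of the $\reg{H}$ and $\reg{H^{\trans}}$ operations into a single matrix power $U^{\Sigma-k}$. This consolidation is what lets the trace reduce to a clean eigenvalue sum, which via \cref{claim:feyn} precisely reassembles into the desired mixture over eigenvalue-shifted circuits.
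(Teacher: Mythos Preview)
Your argument is correct and, for items~1 and~2, essentially matches the paper: for item~1 the paper expands $\ket{\Phi(U^{\Sigma-k})}$ in the eigenbasis \emph{before} tracing (writing it as $\sum_i \lambda_i^{\Sigma-k}\ket{\Phi(v_iv_i^\dagger)}$ and using orthogonality of the $\ket{\Phi(v_iv_i^\dagger)}$), whereas you trace first and pick up the coefficient $\tfrac{1}{d}\tr(U^{k'-k})$ and then diagonalize; these are the same computation in a different order. Item~2 is identical.

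For items~3 and~4 you take a genuinely different route. The paper simply observes that no gadget ever acts on $\reg{H^{\trans}}$, so tracing it out is harmless; since $\reg{H}$ and $\reg{H^{\trans}}$ were initialized in an EPR pair, this leaves $\reg{H}$ in the maximally mixed state, and the rest of the analysis (including the role of the $\reg{K}$ counter) is literally unchanged. You instead re-initialize $\reg{H}$ to $\ket{0}$, re-run the induction, and use that powers of $U$ commute so that the $\reg{H}$ content depends only on the weight $k$. Both are valid, but the paper's structural argument is a one-liner that requires no new computation. More importantly, the paper's choice (maximally mixed $\reg{H}$) composes cleanly with item~1: removing both $\reg{K}$ and $\reg{H^{\trans}}$ with $\reg{H}$ maximally mixed yields exactly the Sheridan--Maslov--Mosca variant discussed after the proposition. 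Your $\ket{0}$ initialization leans on the $\reg{K}$ register for orthogonality, so it does not combine with item~1 in the same way (the cross-term after tracing would be $\bra{0}U^{k-k'}\ket{0}$ rather than $\tfrac{1}{d}\tr(U^{k-k'})$).
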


The first two items in the above proposition remove the $\log(n)$ gate and space overhead in the protocol; the last two reduce the space overhead by $\log_2(d)$ qubits.
Items 1 or 2 can be combined with items 3 or 4: the algorithm with $\reg{K}$ and $\reg{H^{\trans}}$ registers removed is the version stated by Sheridan, Maslov, and Mosca~\cite[Appendix A]{smm09}.
They state the algorithm for $\controlled U$ queries only, but the extension to $\controlled U^\dagger$ is straightforward.
That version has an overhead of $\bigO{n \log(d)}$ gates and $\log_2(d)$ qubits.

Item 2 is particularly relevant when $U$ is taken from a discrete ensemble.
For example, if we know that $U$ is the phase oracle for a Boolean function, so that $U^2 = I$, it may make more sense for an algorithm to have access to a random $\controlled(\pm U)$ than a $\controlled (\phi U)$ for a phase which is random over the entire complex unit circle.

\begin{proof}
To prove item 1, let $U = \sum_{i=1}^d \lambda_i\proj{v_i}$ be the eigendecomposition of $U$.
By an identical argument to \cref{claim:sim}, when $\reg{K}$ is not used, the outcome of the simulation circuit is
\begin{align*}
    \ket{\unctrl_{\reg{-K}}(\circuit)}_{\reg{HH^{\trans}CRS}}
    &= \sum_{k=-\infty}^{\infty} \ket{\feyn(k)}_{\reg{CRS}} \ket{\Phi(U^{\sigma(\tmpop_n) + \dots + \sigma(\tmpop_1) - k})}_{\reg{HH^{\trans}}} \\
    &= \sum_{k=-\infty}^{\infty} \ket{\feyn(k)}_{\reg{CRS}} \sum_{i=1}^d \lambda_i^{\sigma(\tmpop_n) + \dots + \sigma(\tmpop_1) - k}\ket{\Phi(v_iv_i^\dagger)}_{\reg{HH^\trans}} \\
    &= \sum_{i=1}^d \lambda_i^{\sigma(\tmpop_n) + \dots + \sigma(\tmpop_1)} \sum_{k=-\infty}^{\infty} \lambda_i^{-k} \ket{\feyn(k)}_{\reg{CRS}} \ket{\Phi(v_iv_i^\dagger)}_{\reg{HH^\trans}} \\
    &= \sum_{i=1}^d \lambda_i^{\sigma(\tmpop_n) + \dots + \sigma(\tmpop_1)} \ket{\circuit(\controlled (\lambda_i^{-1} U))} \ket{\Phi(v_iv_i^\dagger)}_{\reg{HH^\trans}}
\end{align*}
In the final line, we use \cref{claim:feyn}.
Notice that the $\ket{\Phi(v_iv_i^\dagger)}$'s are orthogonal for $i \in [d]$ with $\braket{\Phi(v_iv_i^\dagger)}{\Phi(v_iv_i^\dagger)} = \frac{1}{d}$.
So, we can conclude that
\begin{align*}
    \tr_{\reg{HH^\trans}}(\proj{\unctrl_{\reg{-K}}(\circuit)}_{\reg{HH^{\trans}CRS}})
    &= \frac{1}{d} \sum_{i=1}^d \proj{\circuit(\controlled (\lambda^{-1}U))},
\end{align*}
as desired.

To prove item 2, we observe that \cref{claim:sim} with $\abs{\reg{K}} = p$ implies that
\begin{align*}
    \ket{\unctrl(\circuit)}
    &= \sum_{k=-\infty}^{\infty} \ket{\feyn(k)}_{\reg{CRS}} \ket{k \mod p}_{\reg{K}}\ket{\Phi(U^{\sigma(\tmpop_m) + \dots + \sigma(\tmpop_1) - k})}_{\reg{HH^\trans}} \\
    &= \sum_{k=0}^{p-1} \ket{\feyn(k \mod p)}_{\reg{CRS}} \ket{k}_{\reg{K}}\ket{\Phi(U^{\sigma(\tmpop_m) + \dots + \sigma(\tmpop_1) - k})}_{\reg{HH^\trans}}
\intertext{
    Here, we use that $U^p = I$, so that the $\reg{HH^\trans}$ registers only depends on the value of $k \mod p$.
    So, we get that
}
    \tr_{\reg{KHH^{\trans}}}(\proj{\unctrl(\circuit)}_{\reg{KHH^{\trans}CRS}})
    &= \sum_{k = 0}^{p-1} \proj{\feyn(k \mod p)}_{\reg{CRS}}.
\end{align*}
This successfully simulates the desired output from \cref{claim:feyn-mixture}.

To prove item $3$, we notice that the simulation only has a gate applied to the $\reg{H}^\trans$ register when $\controlled U^*$ or $\controlled U^\trans$ is in the circuit.
So, if no such queries are made, then we can trace out this register and perform the identical circuit, with the only difference being that $\reg{H}$ is initialized to the maximally mixed state.
An identical argument proves item $4$.
\end{proof}

\section*{Acknowledgments}

E.T.\ is supported by the Miller Institute for Basic Research in Science, University of California Berkeley. 
J.W.\ is supported by the NSF CAREER award CCF-2339711.
We thank Andr\'{a}s Gily\'{e}n, Fermi Ma, William Kretschmer, Ryan O'Donnell, Robin Kothari, Jeongwan Haah, and Rolando Somma for conversations which helped us with the question, ``Is this anything?''

\printbibliography

\end{document}